\algnewcommand{\Initialize}[1]{%
	\State \textbf{Initialize:}
	\Statex \hspace*{\algorithmicindent}\parbox[t]{.8\linewidth}{\raggedright #1}
}
\algnewcommand{\Inputs}[1]{%
	\State \textbf{Inputs:}
	\Statex \hspace*{\algorithmicindent}\parbox[t]{.8\linewidth}{\raggedright #1}
}
\newcommand{\Rmnum}[1]{\expandafter\@slowromancap\romannumeral #1@}
\newtheorem{definition}{Definition}
\newtheorem{theorem}{Theorem}
\begin{document}
	%
	\title{Peer Offloading with Delayed Feedback in Fog Networks}
	\author{\IEEEauthorblockN{Miao Yang, Hongbin Zhu, Hua Qian,~\textit{Senior Member,~IEEE}, Yevgeni Koucheryavy,~\textit{Senior Member,~IEEE},  Konstantin Samouylov, and Haifeng Wang}
		\thanks{This paper was presented in part at the 2020 IEEE International Conference on Acoustics, Speech, and Signal Processing (ICASSP 2020).
		
		M.~Yang and H.~Qian are with Shanghai Advanced Research Institute, Chinese Academy of Sciences (CAS), also with School of Information Science and Technology, ShanghaiTech University, and also with University of Chinese Academy of Sciences.
		H.~Zhu and H. Wang are with School of Information Science and Technology, ShanghaiTech University, also with Shanghai Institute of Microsystem and Information Technology, CAS.
		Yevgeni Koucheryavy is with Tampere University, Finland, and also with 
		National Research University Higher School of Economics, Moscow, Russia. 
		Konstantin Samouylov is with Peoples Friendship
		University of Russia (RUDN University), Moscow, Russian Federation.
		Corresponding author: Hongbin Zhu (zhuhb1@shanghaitech.edu.cn).			
		
		This work was supported in part by the National Natural Science Foundation of China (Grant No. 61971286) and the Science and Technology Commission Foundation of Shanghai (Grant No. 19DZ1204300).}}
	
	\maketitle
	\begin{abstract}		
		Comparing to cloud computing, fog computing performs computation and services at the edge of networks, thus relieving the computation burden of the data center and reducing the task latency of end devices.		
		Computation latency is a crucial performance metric in fog computing, especially for real-time applications.
		In this paper, we study a peer computation offloading problem for a fog network with unknown dynamics.
		In this scenario, each fog node (FN) can offload its computation tasks to neighboring FNs in a time slot manner.		
		The offloading latency, however, could not be fed back to the task dispatcher instantaneously due to the uncertainty of the processing time in peer FNs.
		Besides, peer competition occurs when different FNs offload tasks to one FN at the same time.
		To tackle the above difficulties, we model the computation offloading problem as a sequential FN selection problem with delayed information feedback.
		Using adversarial multi-arm bandit framework, we construct an online learning policy to deal with delayed information feedback.
		Different contention resolution approaches are considered to resolve peer competition.
		Performance analysis shows that the regret of the proposed algorithm, or the performance loss with suboptimal FN selections, achieves a sub-linear order, suggesting an optimal FN selection policy.
		Besides, we prove that the proposed strategy can result in a Nash equilibrium (NE) with all FNs playing the same policy. 	
		Simulation results validate the effectiveness of the proposed policy.		
	\end{abstract}

	\begin{keywords}
		Fog computing, adversarial multi-arm bandit, reinforcement learning, task offloading, delayed feedback.
	\end{keywords}
	
	\section{Introduction}
	\label{sec:intro}

	The pervasive mobile devices, together with the rapidly increasing mobile applications, such as, augmented reality, virtual reality, online gaming and super-resolution video streaming, are revolutionizing our way of life \cite{atzori2010internet}.
	These massive devices and heterogeneous services require unprecedentedly high transmission speed and low latency \cite{Elbamby2019}. 
	The limited battery capacities and computation resources of the mobile devices, on the other hand, place stringent restrictions on local computations.
	
	Fog computing has recently emerged to assist in the task processing of mobile devices \cite{Bonomi2012, Yi2015, Bonomi2014}. 
	The key idea of fog computing is to utilize computation resources of fog nodes (FNs), which are usually located at the edge of networks.
	These FNs can be the servers or micro base stations with extra computation resources. 
	The computation tasks of mobile devices can be offloaded to FNs, which releases the computation workloads of mobile devices and reduces their power consumption.	
	On the other hand, the offloaded computation tasks and the computation capability of each FN vary, which may result in unbearable latency and degrade the users' experience. 
	For example, some weak FNs with poor computation capacity or extensive workload are not in favor of computation-intensive applications.
	
	Intelligent computation peer offloading strategy is desired to fully utilize the neighboring FNs with high quality of experience.
	By means of peer computation offloading, the weak FNs can offload their emerging computation tasks to the FNs in the vicinity.
	With efficient peer offloading strategy, fog computing can assist the terminals to satisfy stringent latency requirements and support a variety of internet of things (IoTs) devices whose computation resources are limited.
	
	\subsection{Prior Work}
	In recent literatures, quite a  few computation offloading problems over the fog networks have been proposed.
    The authors in \cite{Zhang2017} investigated the computation offloading problem by adaptively offloading the computation tasks from cloud to fog servers with the predicted mobility. 
    The task offloading problem was formulated as a multi-period generalized assignment problem in \cite{Wang2019}, which aimed at minimizing the total delay by offloading tasks to suitable fog nodes in each period.
    The work in \cite{Deng2016} studied a workload assignment problem between fog and cloud that considered the trade-off between power consumption and transmission delay. 
    The above studies are based on centralized scheduling with high overhead of information exchange at edge devices.  
    Such offloading algorithms do not fit for real-time and massive data services. 
    In addition, these studies do not consider the offloading among FNs, which can be more challenging with unbalanced workload among FNs.
	
	To tackle the peer offloading problem, the authors in \cite{CY2017} and \cite{TDinh2017} formulated the offloading problem as a convex optimization problem to minimize the energy consumption with latency constraints.
	In \cite{Kety2016}, a latency minimization problem was studied to allocate the computation resources of the edge servers.
	Meanwhile, the authors in \cite{khaledi2016} proposed a task allocation approach that minimized the overall task completion time using a multidimensional auction. 	
	In the above works,  it is generally assumed that information of the fog network is completely known, thus the computation offloading is modeled as a deterministic scheduling problem.
	However, such assumptions do not hold in practice since the global information could not be obtained in real-time and the network status may change rapidly.
	
	For the task offloading problem with unknown dynamics, the authors in \cite{Zhu2019} proposed an online peer offloading algorithm for time-varying channel conditions and stochastic arrival tasks.
	The work in \cite{GL2017} proposed an online secretary algorithm to minimize the offloading latency with unknown arrival rates of neighboring FNs.
	The latency, the energy consumption, and the switching cost were jointly considered as a stochastic programming problem in \cite{ZhuZW2019}.
	Taking into account the requirements of ultra-reliable low-latency, the authors in \cite{Liu2019} proposed a mechanism to minimize the users' power consumption while considering the allocated resources.
	Besides, in \cite{MY2020}, the workloads of neighboring FNs were formulated as a Markov model and a latency minimizing algorithm was proposed to track these dynamics.
	These works assume that the workload of FNs satisfies some stationary distributions.
	Furthermore, there exist competition and collision effects among other FNs offloading choices.
	Some unexpected tasks brought by dynamic network structure and conditions could deteriorate the offloading performance.
	
	In general, all works mentioned above assume instantaneous feedback of offloading decisions.
	The task assigner is assumed to receive the expected latency when the tasks are dispatched.
	This assumption is not always valid in the real world since the FN that provides computation ability may not be able to give accurate latency estimates when the tasks are received.
	The exact execution clock cycle of each task can not be obtained until it is finished. 
	The execution time of a task is coupled with the hardware and the software state, such as, the cache condition, resource contention, scheduling policy, interrupt management, etc. 
	In fact, even the measurement of the worst-case execution time (WCET) of a task is quite a headache problem in the field of real-time operating system (RTOS) \cite{gustafsson2010malardalen}.

\subsection{Contribution and Organization}
	In this paper, for the FN who wants to offload tasks, we assume that: (1) the global FNs state information is not known; (2) the resources of service FNs may be unpredictably competed by other FNs; (3) the offload feedback is obtained only when the task was processed and finished by the computation FN.
	These assumptions naturally raise three challenging problems.
	Firstly, the performance achieved from each FN is unaware beforehand and has to be learned due to dynamic FN conditions. 
	Secondly, the achieved performance also depends on the set of competing tasks, which is unpredictable since the decisions of other FNs are made locally without coordination. 
	Thirdly, the feedback of the offloaded task is delayed, which may deteriorate the achieved performance of existing strategies.
	Most algorithms for computation offloading do not fully consider the uncertainties mentioned above and are not in favor of these assumptions.
	
	In \cite{Yang2020}, we studied the computation peer offloading problem in a fog-enabled network. 
	Our goal is to minimize the peer offloading latency in uncertain environment without instantaneous latency feedback.
	An exponential based algorithm was proposed to handle the tricky problem and simulation results revealed the efficiency of the proposed strategy.
	
	In this paper, we extend our previous work with theoretical proofs for the single-agent setting in \cite{Yang2020}.
	Furthermore, we study the computation peer offloading problem for the multi-agent setting.
	The main contributions of this paper are summarized as follows.
	\begin{itemize}
		\item We develop an online learning framework for fog networks, where each FN can offload their computation tasks to peer FNs in the vicinity. We formulate the latency minimization problem as a sequential decision making problem. Both single-agent and multi-agent settings are considered in this work.
	    \item For the single-agent setting, an online learning algorithm is advocated based on the adversarial multi-armed bandit (MAB) framework, which accommodates arbitrary variation of FN conditions.
    	The proposed method can be harnessed to learn the statistical information of the network through delayed information feedback.
    	Besides, the proposed method allows task dispatching FN to dynamically identify the most suitable service FNs without the prior knowledge of arrival tasks and channel conditions.
    	The theoretical proof is provided, suggesting that the proposed algorithm achieves a sub-linear order with delayed feedback.
    	\item For the multi-agent setting, we consider the different conflict resolutions when multiple tasks are dispatched to the same FN.
    	Assume that all FNs adopt the same strategy, multi-agent setting can be regarded as a game between an FN and the rest of FNs.
    	The proposed algorithm for the multi-agent setting can achieve the same optimality performance as the single-agent setting.
    	Besides, the proposed algorithm tailored for the offloading game can converge to an $\epsilon$-NE in a specific two users setting.
	\end{itemize}

	The rest of the paper is organized as follows. 
	The system model and optimization problem are introduced in section \ref{sec:system model}.
	An online learning algorithm for computation offloading with delayed feedback is proposed in section \ref{sec:algorithm}.
	The theoretical analysis of multi-agent games with the proposed algorithm is studied in \ref{sec:multi-user}.
	Numerical results are given in section \ref{sec:numerical results} and	section \ref{sec:conclusions} concludes our work.
	
	\section{System Model and problem formulation}
	\label{sec:system model}
	\subsection{Network Model}
	Consider a fog network consisting of an end-user layer and a fog layer as shown in Fig.~\ref{fig:system_model}.
	In our framework, the end-user layer includes smartphones, laptops and wireless sensors that do not have enough computation capability and are power limited.
	With the help of fog computing, these end-users can offload their task data to the fog layers for remote distributed computing purposes.
	We consider a fog network with $N$ FNs, defined by $\mathcal{N}=\{1,2,...,N\}$.
	Each FN can process the task from end-user devices or other peer FNs, which is typical in practical fog networking \cite{MChiang2016}.
	
	For concisely, we classify FNs into the following two categories:
	(1) \textbf{Task FN}: An FN is a task FN if it offloads workloads to other FNs. 
	Moreover, it does not receive any workload from other FNs;
	(2) \textbf{Service FN}: An FN is called a service FN if it receives workloads from other FN and does not offload workload to others.	
	Let $V$ and $K$ denote the amounts of task FNs and service FNs in fog networks.
	Notice that in our categorization, a smart FN won't offload workloads to other FNs while receiving workloads from other FNs. 
	In other words, the FN can only play a role in the task FN and the service FN.  
	
	The operational timeline is discretized into time slots (e.g., $10$ ms) for making peer offloading decisions.
	Once FN $i$ receives $b_t^i$ task at time slot $t$, the FN can process the task locally or offload the task to its neighboring FNs.
	The service rate, which denotes the expected number of consecutive task completions per time slot, can help character the computation capability of FN. 
	Intuitively, the FNs with low service rates or their central processing units (CPUs) are running at full load prefer to offload computation tasks to its neighboring FNs, while the FNs with high service rates or abundant idle computation resources prefer to process locally.
	\begin{figure}[t]
	\centering
	\includegraphics[scale=0.45]{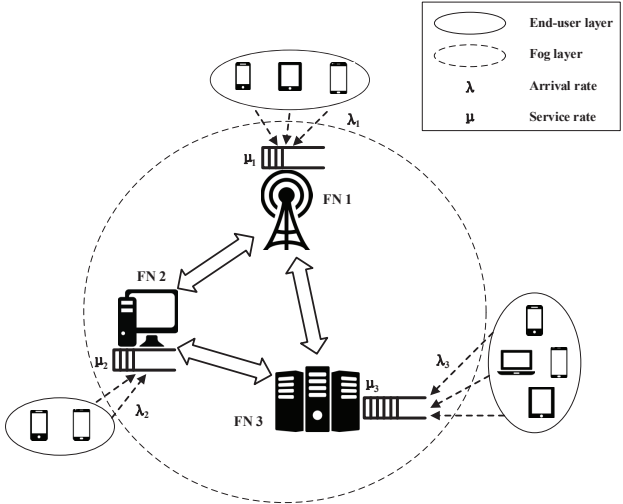}
	\caption{Peer offloading scenario.}
	\label{fig:system_model}
	\centering
	\end{figure}		
	
	Next, we discuss the latency of peer computing and local computing, which is our foremost concern. 
	The latency of peer computing contains three parts: (1) latency of task dispatching; (2) latency of task processing; (3) latency of results receiving.
	The latency of results receiving can be negligible due to the relatively small size of computation results \cite{LChen2018}.

	\subsection{Communication Model}
	Most works about computation offloading assumed that some prior information is available, such as, wireless channel state information (CSI) \cite{huang2012, Munoz2015}.
	However, it is not easy to have access to such information, especially when the number of FNs is large. 
	In this paper, CSI is assumed to be time-varying and unknown before task transmission.
	Task FN $i$ can offload tasks to service FN $j$ over a wireless channel.
	The transmitted rate can be given by
	\begin{align}
		r^{ij} = B\log_2(1+\frac{g^{ij}hP^i}{\sigma^2}),
	\end{align}
	where $g^{ij}$ is the channel gain between FN $i$ and $j$, and $h$ is the average fading gain of the FN $i$. 
	Notation $P^i$ is the transmission power of FN $i$ and $\sigma^2$ denotes the noise power spectral density.
	The bandwidth per node is the same in our model and is given by $B$.
	We denote $\overline{r}^{ij}$ as the average transmission rate between FN $i$ and FN $j$ during one task transmission.
	Thus the task dispatching latency can be described as
	\begin{align}
		D_t^i = \frac{b_t^i}{\overline{r}^{ij}}.
	\end{align}

	The channel deterioration may lead to the failure of the computation tasks' transmissions.
	The task FN can confirm the transmission status by acknowledgement (ACK) or negative-acknowledgement (NACK) message \cite{3gpp36213}.
	Once receiving a NACK message, the task FN will re-transmit the task with the Automatic Repeat-reQuest (ARQ) or the Hybrid Automatic Repeat-reQuest (HARQ) scheme \cite{costello1983error}.
	\subsection{Computation Model}
	The majority of existing offloading studies consider an ideal setting that the task FN can obtain the computation latency when tasks arrive at the service FN.
	However, current OSs do not support instantaneous feedback.
	The initial machine state of the service FN varies before the task executes and some interrupts may occur during the computation processing. 
	In this paper, the feedback latency can be obtained when the service FN finishes the execution of the offloaded task.
	
	Next, we define the computation model in the service FNs.	 
	According to Kleinrock approximation \cite{Bertsekas1992}, the arrival process of the task can be regarded as a Poisson process.
	To model the perturbation of the computation process in service FNs, the service rate of each FN is assumed to be negative exponentially distributed (i.e., generated by a Poisson process).
	Besides, the statistics of arrival rates vary among FNs since the density of end-users is usually not uniform.
	The arrival rate $\lambda$ of an FN is unaware to other FNs.
	
	When a task arrives at the destination, it will be cached in the queue if the service FN is busy.
	Since the computing service rate $\mu_t^j$ of FN $j$ at time slot $t$ may be perturbed by some system interrupts, we assume that $\mu_t^j$ is unknown to all FNs.
	Let $\bar{\mu}_t^j$ denote the average service rate during computation.
	The computation queue can be modeled as an $M/M/1$\footnote{$M$ denotes Markov property. } queue. 
	An $M/M/1$ queueing system indicates that the tasks arrive according to a Poisson process and the service rates are assumed to be independent and exponentially distributed.
	This is consistent with what we described earlier.
	Let $Q_t^j$ denote the current task queue length of FN $j$ when the task arrives at time slot $t$.
	
	Without loss of generality, we assume that the task computation latency in service FNs is proportional to the received task size, such as,  $kb_t^i, k\in R^+$.
	Thus, the computation latency at FN $j$ is
	\begin{align}
		C_t^j = \frac{Q_t^j+kb_t^i}{\bar{\mu}_t^j}.
		\label{equ:computation latency}
	\end{align}	
	Equation (\ref{equ:computation latency}) denotes an estimation of the computation latency.
	\subsection{Problem Formulation}
	Combining the communication latency and computation latency, the total peer offloading latency is
	\begin{align}
		O_t^j = C_t^j + D_t^i=\frac{Q_t^j+b_t^i}{\bar{\mu}_t^j} + \frac{b_t^i}{\overline{r}^{ij}}.
	\end{align}
	
	In practice, the offloading latency can not be infinite, we bound it within $[0, T_{\text{max}}]$, where $T_{\text{max}}$ is the maximal tolerable latency.
	Thus, the loss $l_t^j$ of each choice is described as
	\begin{align}
		l_t^j = \begin{cases}
			\frac{O_t^j}{T_{\max}}, &\text{if}~ O_t^j < T_{\text{max}},\\
			1, &\text{otherwise},
		\end{cases}
		\label{equ:normalized_reward}
	\end{align}
	where $l_{t}^{j} \in [0, 1]$. 
	We want to minimize the average latency of each FN in the long term and it can be formulated as
	\begin{equation}
		\begin{split}
			&\text{minimize}~~~ \lim_{T\rightarrow\infty}\frac{1}{T}\sum_{t=1}^{T}\sum_{i=1}^{K}L_t^i~\mathbbm{1}\{I_t=i\}\\
			&\text{subject to}~~~~~ I_t\in \mathcal{I}, t=1,2,...,T,
			\label{equ:optimization}
		\end{split}
	\end{equation}
	where $I_t$ denotes the index of the offload FN at time slot $t$, $\mathcal{I}$ denotes the set of all the offloading choices and $\mathbbm{1}$ represents indicating vector.
	In our model, the channel conditions and task queues of the service FNs are unaware to the task FN.
	The task FN could not receive the latency results until the feedback has been received.
	Finding the optimal solution of (\ref{equ:optimization}) is challenging with the non-realtime information.
	
	Online learning is a well-established paradigm for making a sequence of decisions when given the knowledge of previous prediction feedback.
	Besides, online learning can reduce computation complexity and has been investigated in many literatures.
	Integrating online learning scheme, the challenging optimization problem of (\ref{equ:optimization}) can be converted into the low-complexity sequential decision problem at each time slot.
	
	Vanilla online learning algorithm operates in a fixed order, such as making choices, receiving feedbacks and updating the learning model.
	These typical online learning algorithms may not work under the delayed feedback setting, where the order of the received feedback may not be in line with the order of the dispatches.
	The lag of the feedback and disorder of the observations bring significant challenges to vanilla online learning algorithms.
	
	\section{Single-agent setting}
	\label{sec:algorithm} 
	In this section, we first introduce the adversarial multi-arm bandit (MAB) framework, then an offloading algorithm based on the exponentially weighted strategy is proposed.
	\subsection{Adversary multi-arm bandit}
	\label{sec:adversial_bandit}
	The adversarial MAB is an efficient algorithm when engaged in the dynamic environment.
	The offloading problem in section \ref{sec:system model} can be modeled as an adversarial MAB consisting of a set of players (task FNs), which can choose an action (service FN) from the action sets (available service FNs).	
	When the task FN makes a choice at time slot $s$, the offloading loss can be observed by the task FN after $d_s$ time slots.
	The action of offloading is adopted by one FN at slot $s$ and the loss is obtained at slot $t$, so we have $s+d_s = t$.
	
	To measure the utility of the task offloading scheme $\Phi$, the concept of \emph{expected regret} is introduced.
	Expected regret is harnessed to distinguish the cumulative loss of $\Phi$ and the best strategy in hindsight.
	The desired $\Phi$ is to minimize its expected cumulative regret $R_{\Phi}(T)$, which is defined as
	\begin{align}
		R_{\Phi}(T) = \mathbb{E}\Big[\sum_{t=1}^{T}l_{s|t}^a\Big]-\min_A\sum_{t=1}^{T}l_t^A,\label{equ:expect_loss}
	\end{align}
	where $\mathbb{E}(\cdot)$ represents the expectation operation, $a\in\{1,...,K\}$ denotes the action and $A$ denotes the best policy which gives the least loss. 
	Notation $l_{s|t}^a$ represents the loss introduced by the time lag and $l_t^A$ denotes the loss of optimal policy $A$.
	
	In adversarial multi-arm bandit setting, task FN chooses service FN $j$ using the mixed strategies according to a probability distribution $\bm{p}_t=[p_t^1,p_t^2,...,p_t^K]$ \cite{Auer2002}.
	Due to the variation of computation lag, the result computation latency may be observed in a disorderly manner. 
	We then introduce an algorithm to handle the delayed feedback and make the decision for task offloading.
	
\subsection{Proposed delayed exponential based algorithm}
	Let $\mathcal{L}_t$ denote the arrival feedback set at time slot $t$ and $|\mathcal{L}_t|$ denote the size of feedback set.
	Recall that the feedback at slot $t$ includes losses incurred at slot $s$.
	Once $\mathcal{L}_t$ is revealed, the FN can get the estimated loss according to $\bm{p}_t$ at the current slot.
	For each $l_{s|t}^j\in \mathcal{L}_t$, a biased estimator $\tilde{l}_{s|t}^j$ for reducing the estimation variance of the loss vector is
	\begin{align}
		\tilde{l}_{s|t}^j=\frac{l_{s|t}^j}{p_{s_n}^j+\beta l_{s|t}^j}\mathbbm{1}_{\{I_{s}=j\}},
		\label{equ:hat_l}
	\end{align}
	where $\beta$ is the biased factor of the loss for implicit exploration\cite{neu2015}. 
		
	During a time slot $t$, the task FN may collect more than one feedback or without any feedback. 
	Once receiving the feedback, it will be buffered and then updates $\bm{p}_{t+1}$ at the beginning of slot $t+1$.
	If the offloading feedback is returned in $[0, d_{\max}]$, we regard it as an efficient offloading action.
	On the other hand, if $d_s\geq d_{\max}$, the task FN regards the action as a failure because of the intolerable latency.
	Here, the task FN updates $\bm{p}_t$ with loss $l_{s|t}=1$.
	The computation task will be re-offloaded after $d_{\max}$ according to the selection probability.
	For $l_{s|t}^i\in \mathcal{L}_t$, the task FN $i$ updates the weights and probabilities of all service FNs with
	\begin{align}
		\tilde{l}_t^j &= \sum_{n=1}^{|\mathcal{L}_t|}\tilde{l}_{s|t}^j,\label{equ:sum_loss}\\
		w_t^j &= w_{t-1}^j\exp(-\eta_t\tilde{l_t^j}),\label{equ:update_w}\\
		p_t^j&=\frac{w_{t}^j}{\sum_{k\in K}w_{t}^k},\label{equ:p}
	\end{align}
	where $\eta$ denotes the step size. 
	If $\mathcal{L}_t=\emptyset$, the task FN directly reuses the previous distribution, i.e., $\bm{p}_{t+1}=\bm{p}_t$, and the offloading strategy does not change at all.
	When the feedback is received without lag, the updating scheme of the proposed algorithm is similar to the vanilla EXP3-IX algorithm \cite{neu2015}.
	\begin{algorithm}[H]
		\caption{Delayed Exponential Based (DEB) Algorithm.}
		\label{alg:ADEB}
		\begin{algorithmic}[1]			
			\Initialize{Set $\bm{p}_0=(\frac{1}{K},...,\frac{1}{K})$ for all FN $j \in \mathcal{K}$.
			}
			\For{$t=1,2,...,T$}
			\State Offload task to service FN according to $\bm{p}_{t-1}$.
			\State Feedback collected in set $\mathcal{L}_t$.
			\For{$n=1,2,...,|\mathcal{L}_t|$} 
			\State Estimate $l_{s|t}^j$ via (\ref{equ:hat_l}) if $l_{s|t}^j\in \mathcal{L}_t$.	
			\State Collect the loss of each FN by (\ref{equ:sum_loss}).					
			\EndFor
			\If{$d_s>=d_{\max}$}
			\State $l_s = 1$.
			\EndIf			
			\State Update $w_t^j$ and $p_t^j$ by (\ref{equ:update_w}) and (\ref{equ:p}). 
			\EndFor
		\end{algorithmic}
	\end{algorithm}
	\textbf{Algorithm \ref{alg:ADEB}} summarizes the proposed DEB algorithm for processing the task offloading problem with delayed feedback.
	Initially, the probability of each FN is set as $1/K$.
	The task FN makes the offloading choice according to the probability distribution $\bm{p}_{t-1}$.
	If fresh feedback arrives, the task FN calculates the estimated loss and updates the weight of each service FN. 
	If not, the task FN makes the decision according to the previous distribution.	
	\subsection{Performance analysis}
	In this part, we analyze the regret of the proposed delayed exponential based algorithm.
	
	\begin{theorem}
		\label{theorem:1}
		Let $T$ denote the time horizon of the learning procedure and $D=\sum_{t=1}^{T}d_t$ denotes the total delay. 
		Define $|\mathcal{M}|$ to be the regret caused by all the samples that are not received before round $T$.
		The regret $|\mathcal{M}|$ is bounded due to the intolerable feedback delay $d_{\max}$.
		With probability at least $1-\delta$, the regret of the proposed algorithm satisfies
		\begin{align*}
			\bar{\mathcal{R}}&\leq\frac{\ln K}{\eta}+\frac{\ln(K/\delta)}{2\beta}+(\frac{\eta e}{2}+\beta)(KT+\frac{K\ln(K/\delta)}{2\beta})\\
			&+D\eta +|\mathcal{M}|.
		\end{align*}
		In particular, setting $\eta=2\beta=\sqrt{\frac{\ln K+\ln(K/\delta)}{D+(e+1)KT/2}}$,
		we have
		\begin{align}
		\bar{\mathcal{R}}\leq&2\sqrt{\Big(D+\frac{(e+1)}{2}KT\Big)\Big(2\ln K-\ln({\delta})\Big)}\nonumber\\
		&+\frac{(e+1)}{2}K\ln(\frac{K}{\delta})+|\mathcal{M}|.
		\end{align}
	\end{theorem}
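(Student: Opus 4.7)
The plan is to adapt the standard potential-function analysis of EXP3-IX to the delayed-feedback setting, isolating the extra regret contribution due to delay. First, I would introduce the potential $\Phi_t = \ln\bigl(\sum_{j=1}^K w_t^j\bigr)$ associated with the exponential weights in (\ref{equ:update_w}) and bound the telescoping sum $\Phi_T - \Phi_0$ from both sides. The lower bound comes from restricting the normalizer to any fixed arm $j^\ast$, yielding $\Phi_T - \Phi_0 \geq -\eta \sum_t \tilde{l}_t^{\,j^\ast} - \ln K$. For the upper bound, combining $\exp(-x) \leq 1 - x + x^2/2$ for $x\geq 0$ with $\ln(1+x) \leq x$ converts each per-round increment into $-\eta \langle \bm{p}_t, \tilde{\bm{l}}_t\rangle + (\eta^2/2)\sum_j p_t^j (\tilde{l}_t^j)^2$, producing the familiar $\tfrac{\ln K}{\eta}$ entropy term together with a second-moment residual.

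Next, I would connect the biased estimates $\tilde{l}_{s|t}^j$ in (\ref{equ:hat_l}) to the true losses. Because $p_s^j\,\tilde{l}_{s|t}^j \leq l_{s|t}^j$, a supermartingale argument on $\exp\bigl(2\beta\sum_t(l_t^j - \tilde{l}_t^j)\bigr)$ combined with a union bound over arms (in the style of Neu, 2015) yields that with probability at least $1-\delta$, $\sum_t\bigl(l_t^j - \tilde{l}_t^j\bigr) \leq \ln(K/\delta)/(2\beta)$ for every $j$ simultaneously. Applying this identity twice, once to pass from $\langle \bm{p}_t, \tilde{\bm{l}}_t\rangle$ to the true played losses and once to bound the comparator, delivers the $\ln(K/\delta)/(2\beta)$ term and, after controlling $\sum_j p_t^j (\tilde{l}_t^j)^2 \leq \sum_j \tilde{l}_t^j$, the full $(\tfrac{\eta e}{2}+\beta)\bigl(KT + \tfrac{K\ln(K/\delta)}{2\beta}\bigr)$ variance term in the statement.

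The crux and main obstacle is the delay contribution. In the delayed setting, $\bm{p}_t$ has been updated only from feedbacks whose observations have arrived by round $t$, not from the ``ideal'' sequence in which each $\tilde{\bm{l}}_s$ is used at the end of round $s$. I would therefore introduce a hypothetical distribution $\hat{\bm{p}}_s$ that the algorithm would have used at round $s$ had all prior losses been observed on time, and compare $\bm{p}_t$ with $\hat{\bm{p}}_s$. The exponential-weights stability inequality $\bigl|\ln p_t^j - \ln \hat{p}_s^j\bigr| \leq \eta\sum_{\tau\in \mathcal{O}_t}\tilde{l}_\tau^j$, where $\mathcal{O}_t$ is the set of plays outstanding at round $t$, bounds the pointwise mismatch; summing over $t$ makes each outstanding slot $\tau$ contribute $d_\tau$ copies of a bounded per-round loss, so the telescoping cost collapses into the additive $D\eta$ term with $D = \sum_t d_t$. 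Plays whose feedback still has not arrived by the horizon $T$, or whose delay would exceed $d_{\max}$, are gathered into the residual set $\mathcal{M}$ and contribute the final $|\mathcal{M}|$ term, which is finite because $d_{\max}$ truncates any runaway delay.

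Finally, I would combine the four contributions and invoke the high-probability event simultaneously in every step where $\tilde{l}_t^j$ was replaced by $l_t^j$. Substituting $\eta = 2\beta = \sqrt{\bigl(\ln K + \ln(K/\delta)\bigr)\big/\bigl(D + (e+1)KT/2\bigr)}$ balances the $\tfrac{\ln K}{\eta} + \tfrac{\ln(K/\delta)}{2\beta}$ terms against the combined $(\tfrac{\eta e}{2}+\beta)KT + D\eta$ terms via AM-GM, yielding the closed-form $2\sqrt{\bigl(D+(e+1)KT/2\bigr)\bigl(2\ln K - \ln\delta\bigr)}$ leading factor and leaving the $\tfrac{e+1}{2}K\ln(K/\delta) + |\mathcal{M}|$ lower-order remainders. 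The delay-handling step above is the only non-standard ingredient; the rest follows the EXP3-IX template.
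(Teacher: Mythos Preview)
Your skeleton---potential-function lower/upper bounds on $\ln W_T - \ln W_0$, then Neu's implicit-exploration concentration lemma to pass between $\tilde{l}_t^j$ and $l_t^j$---is exactly what the paper does. The difference is in the delay machinery. The paper does not introduce a hypothetical non-delayed distribution $\hat{\bm{p}}_s$; instead it observes that the potential increment at the \emph{update} round $t=s+d_s$ naturally produces terms of the form $\sum_i p_{s+d_s}^i\,\tilde{l}_s^i$ and $\sum_i p_{s+d_s}^i(\tilde{l}_s^i)^2$, and then converts $p_{s+d_s}^i$ back to $p_s^i$ via two lemmas imported from Thune et al.\ (2019): an additive bound $p_{s+d_s}^i-p_s^i\le -\eta\sum_m q^i(\cdot)\tilde{l}^i_{z'_m}$ (their Lemma~10) for the linear term, followed by a counting argument over the intervals $[t,t+d_t)$ that collapses to $\sum_t N_t\le D$; and a multiplicative bound $q^i(\Phi_m)\le\bigl(1+\tfrac{1}{2N-1}\bigr)q^i(\Phi_{m-1})$ (their Lemma~11) for the second-moment term. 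Your outstanding-set stability argument is a legitimate alternative for the linear term and leads to the same $D\eta$ contribution.

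The genuine gap in your proposal is the second-moment term. You write that the upper bound on $\Phi_t-\Phi_{t-1}$ yields $(\eta^2/2)\sum_j p_t^j(\tilde{l}_t^j)^2$ and then invoke $\sum_j p_t^j(\tilde{l}_t^j)^2\le\sum_j\tilde{l}_t^j$ to reach the $\bigl(\tfrac{\eta e}{2}+\beta\bigr)$ coefficient. But the probability appearing in that residual is the one at the \emph{update} time, $p_{s+d_s}^j$, whereas the estimator $\tilde{l}_s^j=l_s^j/(p_s^j+\beta l_s^j)\mathbbm{1}$ is built from the \emph{play}-time probability $p_s^j$; the cancellation $p^j\tilde{l}^j\le l^j$ only works with matching indices. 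The factor $e$ in $\eta e/2$ is precisely the cost of this mismatch: the paper bounds $p_{s+d_s}^i/p_s^i\le\bigl(1+\tfrac{1}{2N-1}\bigr)^{N_t}\le e$ via the Thune ratio lemma before applying the standard EXP3-IX simplification. Your sketch neither derives nor accounts for this $e$, so as written it would produce $\eta/2$ rather than $\eta e/2$ and would not match the stated bound. You need either the Thune ratio lemma or an equivalent multiplicative stability estimate on $p_{s+d_s}^j/p_s^j$ to close this step. (A smaller point: your log-stability inequality $|\ln p_t^j-\ln\hat p_s^j|\le\eta\sum_{\tau\in\mathcal{O}_t}\tilde l_\tau^j$ is not correct as stated because the normalizer $W_t$ moves as well; the usable one-sided bounds are additive on $p$ or multiplicative on $p$, which is what the paper's two lemmas provide.)
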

	\begin{proof}[Proof of Theorem 1]
		Given in Appendix A.
	\end{proof}
	According to \textbf{Theorem 1}, the regret of the proposed algorithm can be upper bounded by sub-linear order with determining proper update parameters.
	This result indicates that the regret growth decreases as time goes by, yielding the optimal service FN selections of the proposed algorithm.

	\section{Multi-agent setting}
	\label{sec:multi-user}
	In most cases, the fog network consists of multiple task FNs.
	In this section, we extend our work to the multi-agent setting.
	We consider a game where all players play according to the proposed algorithm with delayed feedback.
	Without the consideration of the delays, an algorithm with sublinear regret, played against itself, will converge to a Nash equilibrium (NE) in the ergodic average sense \cite{Cai2011}.
	In this section, we further investigate our proposed algorithm in the multi-agent setting.
	Adopting the proposed algorithm, the action of each task FN with the delay is proved to achieve an NE.
	
	\subsection{Problem Formulation}
	In the multi-agent setting, there are $V$ active task FNs and $K$ independent service FNs.
	In our paper, a time slot is divided into sub-slots.
	The $V$ task FNs are allocated to access $K$ service FNs in each sub-slots.
	In most cases, this assumption holds since not all task FNs are active to offload tasks at the same time.
	At each time slot, each active task FN chooses one service FN to offload its computation tasks based on its local observations.
	In addition, task FNs do not share information with each other due to the consideration of information security.
	In our framework, the different task FNs access service FNs in a time-division manner. 
	Collisions occur when multiple task FNs choose the same service FN.
	Two collision models are considered in our paper.
	One is the `winner-takes-all' model when a collision occurs. 
	In this model, the computation task of one task FN is processed and the rest would not be processed.
	The other model is that the service FN accepts all the arrival tasks and arranges the arrivals in a random manner in the queue without jumping.
	In the multiple task FNs setting, the total reward under a policy $\Phi$ by time $t$ is the same as (\ref{equ:expect_loss}), which is given by
	\begin{align}
	R_{\Phi}(T) = \mathbb{E}\Big[\sum_{v=1}^{V}\sum_{t=1}^{T}l_{s|t}^a(v)\Big]-\min_A\sum_{v=1}^{V}\sum_{t=1}^{T}l_t^A(v),\label{equ:expect_loss_multi}
	\end{align}
	where $l_{s|t}^a(v)$ denotes the loss of task FN $v$.
	The regret derivation procedure of the proposed algorithm in multiple agents setting is the same as the single case and here we care more about the NE property of the proposed strategy.
	
	An NE is a set of strategies, one for each of multiple players of a game, that each player's choice is his best response to the choices of the other players \cite{Nash1951}.
	Besides, the NE has the self-stability property such that the users at the equilibrium can achieve a mutually satisfactory solution and no user has the incentive to deviate. 
	In the multi-agent decentralized offloading scenario, an NE means no task FN has an incentive of changing its mixed strategy if all other FNs do not change their strategies.
	This property is crucial to the decentralized computation offloading problem, since the task FNs may act in their own interests.
	For each task FN, achieving an NE means the current strategy is their optimal response, thus yielding a minimal latency solution.

	In the following discussion, we will concentrate on investigating the NE property of the proposed algorithm. 
	We investigate the NE property of the proposed algorithm in a two users zero-sum case.
	\subsection{Equilibrium Formulation}	
	Let $U$ be the cost matrix, such that when the row player plays $i$ and the column player plays $j$, the former pays a cost of $U(i,j)$ and the latter gains a reward of $U(i,j)$ (a.k.a., a cost of $-U(i,j)$), where $0\leq U(i,j)\leq 1$ for any $i,j$.
	Notation $\bm{p}_t,\bm{q}_t$ represent the mixed strategies of row player and column player, then we use the convention that
	\begin{align}
		U(\bm{p}_t,j) \triangleq \sum_{i=1}^{K}p_t^{(i)}U(i,j),
	\end{align}
	and
	\begin{align}
		U(\bm{p}_t,\bm{q}_t) \triangleq \sum_{i=1}^{K}\sum_{j=1}^{K}p_t^{(i)}q_t^{(j)}U(i,j).
	\end{align}
	
	NE is a critical concept in game theory to predict the outcome of a game.
	An NE is a strategy profile $(\bm{p}_t^*,\bm{q}_t^*)$ such that there is no motivation for any player to change its strategy since the changed strategy will degrade the benefit. 
	To obtain the desired strategy, we need to define the set of all approximate NE:
	\begin{definition}
		The set of all $\epsilon$-NE points is
		\begin{align}
			\mathcal{N_{\epsilon}}=\Bigl\{(\bm{p}^*,\bm{q}^*)|U(\bm{p}^*,\bm{q}^*)\leq \min_{\bm{p}} U(\bm{p},\bm{q})+\epsilon,\nonumber\\ U(\bm{p}^*,\bm{q}^*)\geq \max_{\bm{q}} U(\bm{p},\bm{q})-\epsilon\Bigr\},
		\end{align}
		and the set of NE points is $\mathcal{N}_0$.
	\end{definition}
	The entity that converges to the set of NE in our case is the ergodic average of $(\bm{p}_t,\bm{q}_t)$.
	\begin{definition}
		The ergodic average of a sequence of distributions $\bm{p_t}$ is defined as:
		\begin{align}
			\bm{\bar{p}}_t\triangleq\frac{\sum_{\tau=1}^{t}\bm{p}_{\tau}}{t}.
		\end{align}
		We say that $(\bm{\bar{p}}_T,\bm{\bar{q}}_T)$ converges in $L1$ to the set of NE if
		\begin{align}
			\lim_{T\rightarrow \infty}\mathop{\arg\min}\limits_{(\bm{p}_T^*,\bm{q}_T^*)\in\mathcal{N}_0} E\{||(\bm{\bar{p}}_T,\bm{\bar{q}}_T)-(\bm{p}_T^*,\bm{q}_T^*)||_1\}=0,
		\end{align}
		which also implies that $\forall \epsilon>0$,
		\begin{align}
			\lim_{T\rightarrow \infty}\mathop{\arg\min}\limits_{(\bm{p}_T^*,\bm{q}_T^*)\in\mathcal{N}_0} P\{||(\bm{\bar{p}}_T,\bm{\bar{q}}_T)-(\bm{p}_T^*,\bm{q}_T^*)||_1\geq\epsilon\}=0,
		\end{align}
		where $||\cdot||_1$ represents $L1$ norm.
	\end{definition}

	The main theory analysis, given in \textbf{Theorem 2}, generalizes this statement for the case with delayed feedback, and the proposed algorithm can converge to an NE with the feedback lag.
	Note that the convergence of the ergodic mean to the set of NE is in the $L1$ norm sense, which is much stronger than the convergence of the expected ergodic mean.
	\begin{theorem}
		\label{theorem_equ}
		Let two players play a zero-sum game with a cost matrix $U$, where $0\leq U(i,j)\leq 1$ for each $i,j$ by the proposed algorithm.
		Denote the delay sequences of the row player and the column player as $\{d_t^r\}$ and $\{d_t^c\}$.
		Let the mixed strategies of the row and column players at round $t$ be $p_t$ and $q_t$, respectively. 
		Then as $T\rightarrow\infty$, the game played by the proposed algorithm versus itself converges to NE.
	\end{theorem}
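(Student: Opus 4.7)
The plan is to reduce Theorem 2 to the single-agent regret bound of Theorem 1 via the classical no-regret-implies-equilibrium argument for two-player zero-sum games. Since each feedback delay satisfies $d_s \leq d_{\max}$, the total delay obeys $D \leq d_{\max} T$, and Theorem 1 therefore guarantees a sublinear regret $\bar{\mathcal{R}} = O(\sqrt{T \ln K})$ with high probability for each player individually when it treats the opposing player's action sequence as its environment.

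First I would apply Theorem 1 twice: for the row player the loss at round $t$ is $U(i, I_t^c)$, and for the column player the loss is $1 - U(I_t^r, j)$, so that minimizing the latter is equivalent to maximizing the gain $U(I_t^r, j)$. Let $R_r(T)$ and $R_c(T)$ denote the resulting sublinear regret bounds supplied by Theorem 1 with delay sequences $\{d_t^r\}$ and $\{d_t^c\}$. Taking expectations over the internal randomization and dividing by $T$ yields
\begin{align*}
U(\bm{\bar{p}}_T, \bm{\bar{q}}_T) &\leq \min_{i} U(i, \bm{\bar{q}}_T) + \frac{R_r(T)}{T}, \\
U(\bm{\bar{p}}_T, \bm{\bar{q}}_T) &\geq \max_{j} U(\bm{\bar{p}}_T, j) - \frac{R_c(T)}{T}.
\end{align*}
Since the residual term $|\mathcal{M}|$ appearing in Theorem 1 is bounded by a constant determined by $d_{\max}$, both deviation terms vanish as $T \to \infty$.

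Next, invoking the minimax theorem $\min_{\bm{p}} \max_{\bm{q}} U(\bm{p}, \bm{q}) = \max_{\bm{q}} \min_{\bm{p}} U(\bm{p}, \bm{q}) \triangleq V^{*}$ and combining the two inequalities shows that for any $\epsilon > 0$ there exists $T_0$ such that, for all $T \geq T_0$, the pair $(\bm{\bar{p}}_T, \bm{\bar{q}}_T)$ lies in $\mathcal{N}_\epsilon$ with probability at least $1 - \delta$. To upgrade this high-probability statement to the $L1$ convergence demanded by Definition~2, I would instantiate Theorem 1 with $\delta_T = 1/T^{2}$ and apply Borel--Cantelli; the logarithmic dependence of the regret on $1/\delta$ preserves sublinearity, so the $\epsilon$-NE gap $\epsilon_T \to 0$ almost surely, and since the iterates $(\bm{\bar{p}}_T, \bm{\bar{q}}_T)$ are uniformly bounded in the simplex, dominated convergence yields convergence of the $L1$ distance from $\mathcal{N}_0$ to zero in expectation.

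The main obstacle is that Theorem 1 is naturally stated against an oblivious loss sequence, whereas in self-play each player's opponent is adaptive with respect to the algorithm's own past actions and random bits. I would address this by verifying that the implicit-exploration estimator in (\ref{equ:hat_l}) together with the delayed update (\ref{equ:update_w})--(\ref{equ:p}) retain their guarantees when conditional expectations are taken with respect to the natural filtration generated jointly by both players' randomness. The IX regret analysis only requires the conditional-unbiasedness-up-to-bias property of $\tilde{l}_{s|t}^j$ at each round, which is stable under adaptive adversaries; once this is confirmed, the chain \emph{sublinear regret} $\Rightarrow$ \emph{ergodic averages near the minimax value} $\Rightarrow$ \emph{$\epsilon$-NE} $\Rightarrow$ \emph{$L1$ convergence to $\mathcal{N}_0$} closes the argument in the standard Freund--Schapire fashion.
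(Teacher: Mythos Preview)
Your approach is essentially the paper's: apply the single-agent regret bound of Theorem~1 to each player separately and conclude that the ergodic averages satisfy the $\epsilon$-NE conditions with $\epsilon\to 0$. One minor slip: your two displayed inequalities on $U(\bm{\bar p}_T,\bm{\bar q}_T)$ do not follow from a \emph{single} player's regret, because the regret bound controls the time average $\bar U_T=\frac{1}{T}\sum_t U(\bm{p}_t,\bm{q}_t)$ rather than the value at the averaged strategy pair; the paper handles this by writing the row bound as $E\{\bar U_T-U(\bm{p},\bm{\bar q}_T)\}\le\epsilon/2$ and the column bound as $E\{U(\bm{\bar p}_T,\bm{q})-\bar U_T\}\le\epsilon/2$, then summing with the choices $\bm{p}=\bm{p}_T^b$, $\bm{q}=\bm{\bar q}_T$ (and the symmetric pair) so that $\bar U_T$ cancels---after which the minimax theorem is unnecessary. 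Conversely, you are more careful than the paper on two points it elides: Theorem~1 is a high-probability statement but the paper applies it directly in expectation, and the opponent in self-play is adaptive; your Borel--Cantelli step and your filtration remark on the IX estimator address both issues.
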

	\begin{proof}[proof of theorem 2]
		Given in Appendix \ref{proof:nash}.
	\end{proof}
	Regarding the action of the adversary as the multiple merged actions of other players, the two players setting can be generalized to the multiple players setting \cite{roughgarden2010algorithmic}.

	\section{Numerical Results}
	\label{sec:numerical results}
	In this section, we evaluate the experiment performance of the proposed DEB algorithm.
	During all the simulations, the task size of each FN is equal and is normalized as $1$. 
	The duration of each time slot is also normalized as $1$.
	\subsection{Single-agent Setting}
	\begin{figure*}[]
		\centering
		\subfigure[Regret performance]{
			\includegraphics[width=58mm]{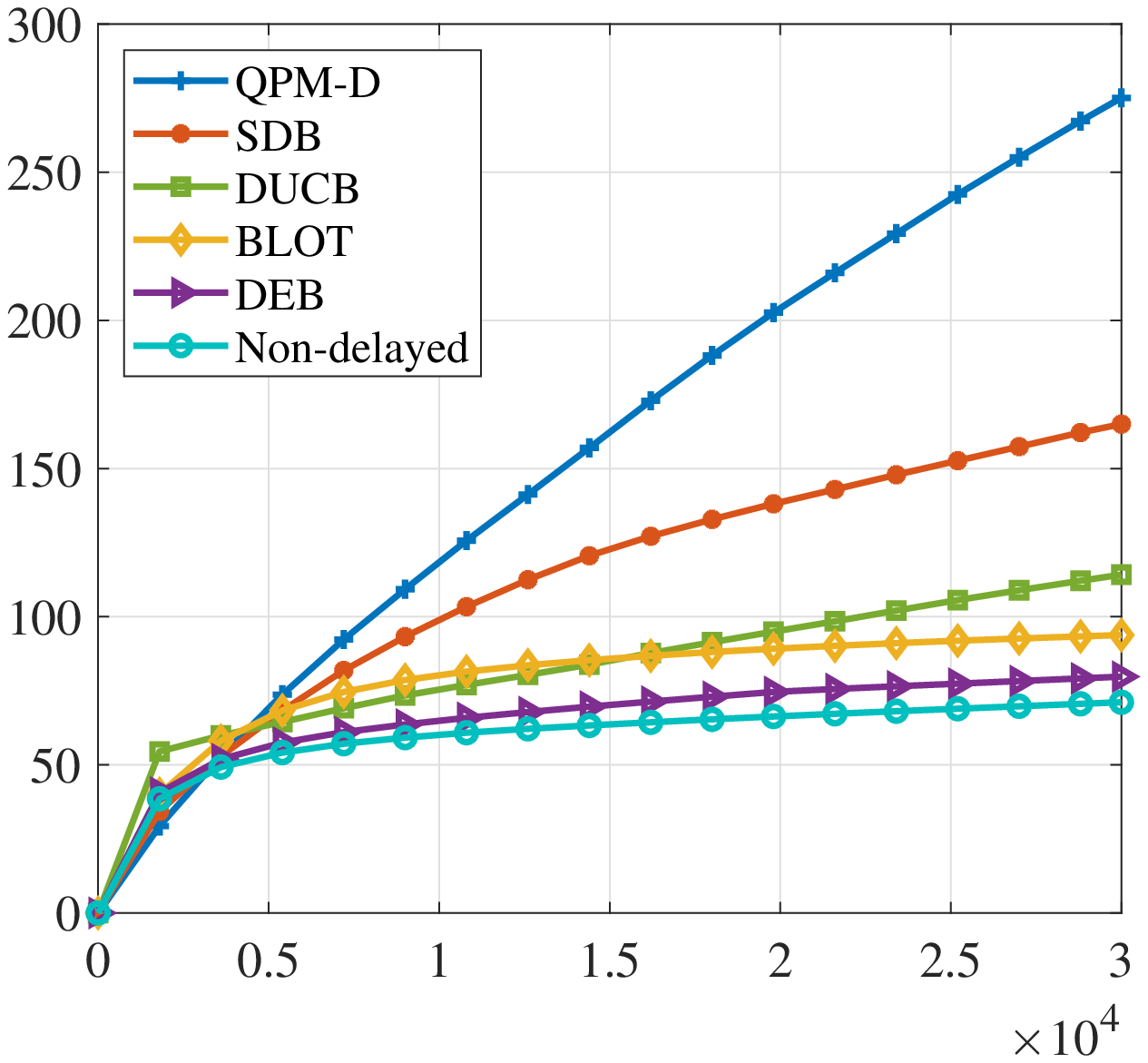}}	
		\label{fig:regret}	
		\subfigure[Latency performance]{
			\includegraphics[width=58mm]{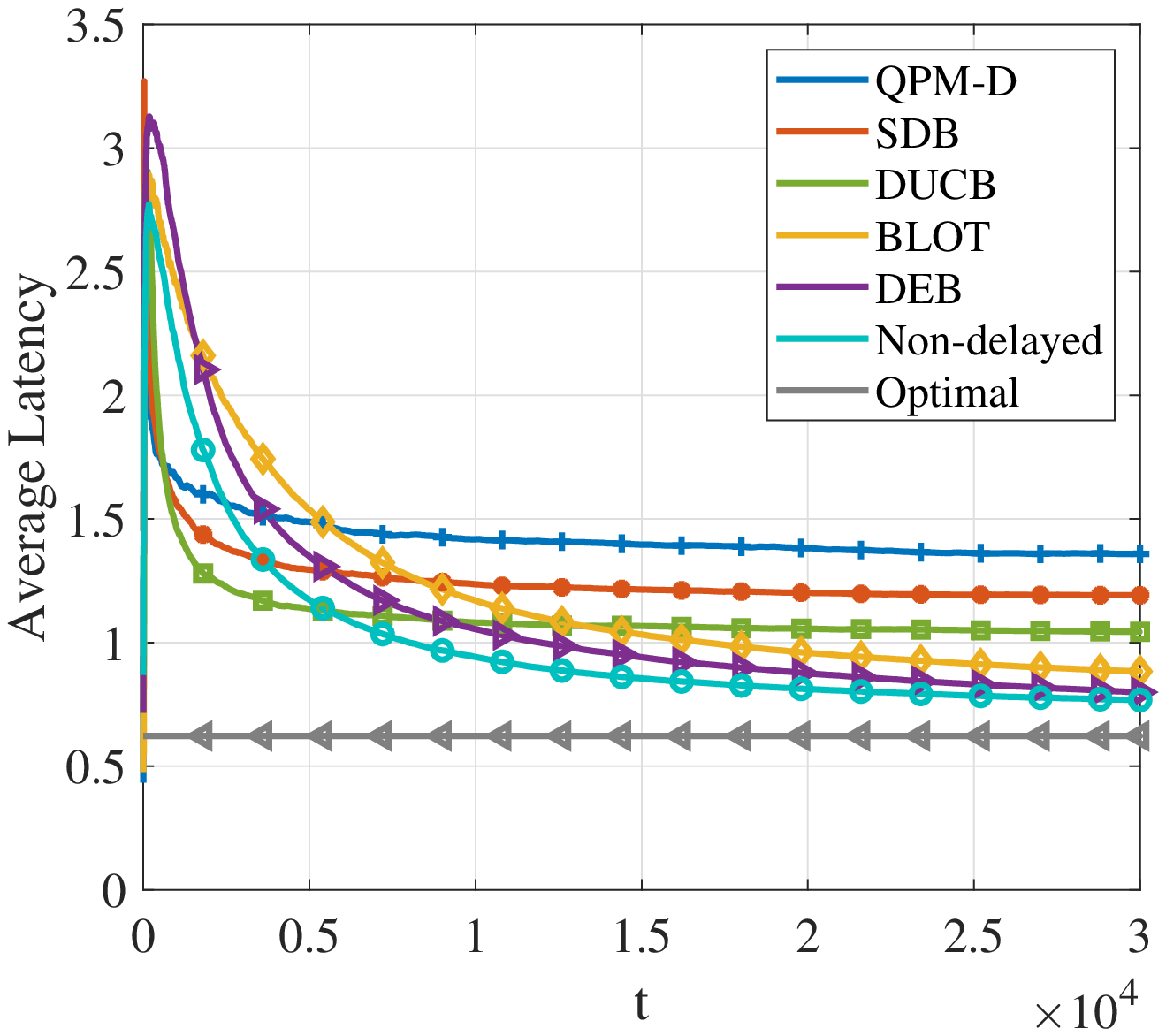}}
		\label{fig:single_latency}
		\subfigure[Statistical selection]{
			\includegraphics[width=58mm]{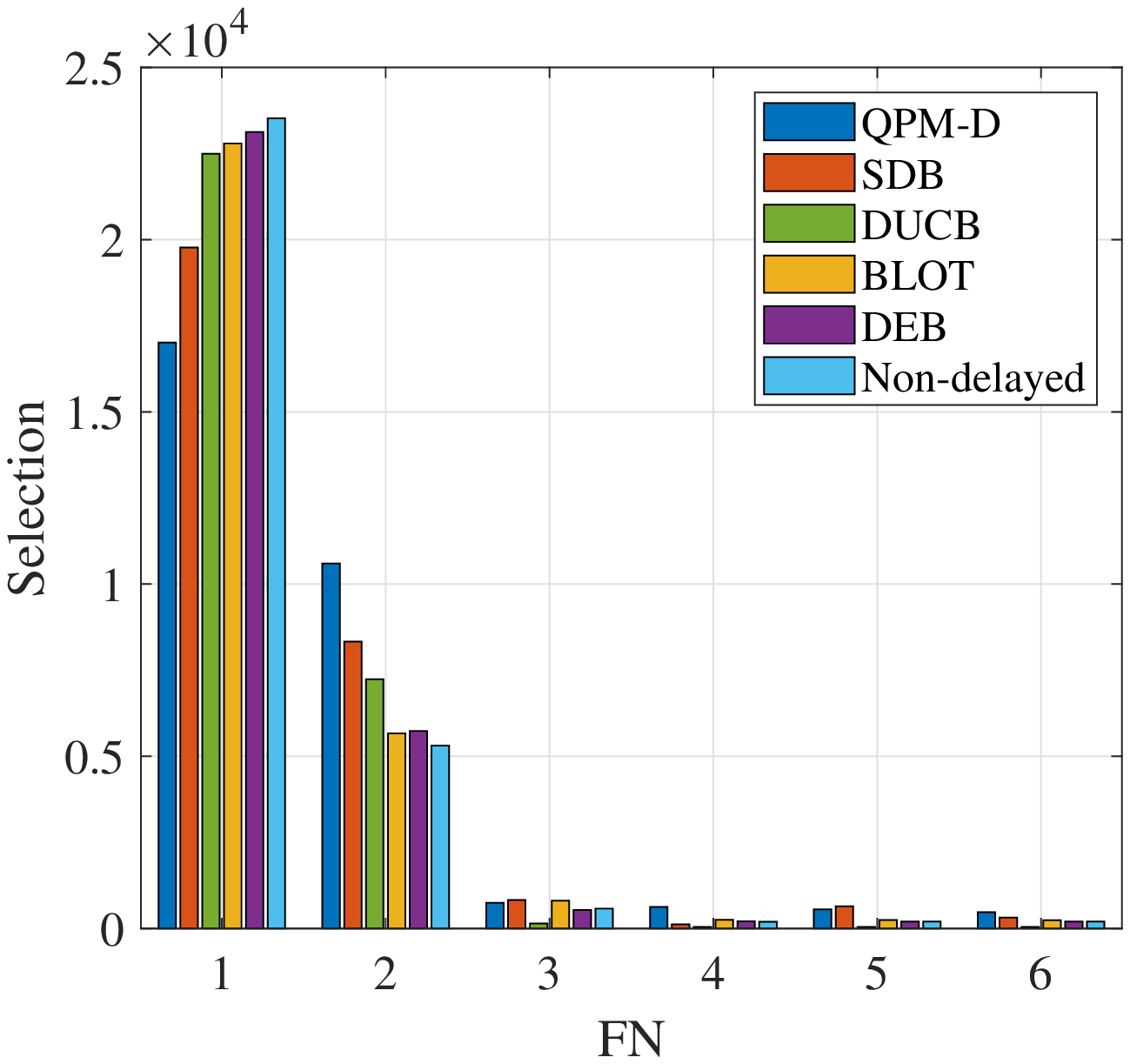}}
		\label{fig:selection}
		\centering
		\caption{Performance comparison of the QPM-D, SGD, DUCB, BLOT, DEB and no-delayed EXP3 algorithms for the single-agent setting in stationary environment.}
		\label{fig:total_single}
	\end{figure*}
	In the first part, we would like to exam the performance of the DEB algorithm in the single-agent setting.
	Consider a fog network consisting of $7$ FNs.
	The number of task FN is $1$, and the rest are service FNs.
	The task FN is deployed in the center of a circle with $2$ km radius, where the service FNs are randomly deployed in this range.
	In particular, the path-loss channel is generated according to the 3GPP propagation environment \cite{3gpp36213}.	
	The total time slot $T$ is set as $30000$.
	Besides, to perform the dynamic computation perturbation, the service rates $\mu$ of FNs follows the Poisson distribution with mean generated from the Gaussian distribution with $\mathcal{N}(6,1)$.
	To normalized the offloading latency, we set $T_{\max}$ as $5$.
	The threshold $t_{\max}$ that prevents the delay of feedback from being too large is set as $3$.

	In fog computing, to the best of our knowledge, there does not exist any literature considering delayed information feedback.
	To evaluate the performance of the proposed algorithm, we compare the performance of the proposed DEB algorithm to the following schemes that can be implemented in the delayed feedback setting. (i) Queued Partial Monitoring with Delayed Feedback (QPM-D) algorithm \cite{Joulani2013}: the feedback of each offloading decision is assigned to the queue corresponding to the selected service FN and the feedback information is employed when the service FN is chosen; (ii) Stochastic Delayed Bandits (SDB) algorithm \cite{Mandel2015}: an algorithm that combines the QPM-D for the main updated scheme and a heuristic scheme for controlling the sampling distribution to improve performance in the presence of delay;
	(iii) Delayed Upper Confidence Bound (DUCB) algorithm \cite{Vernade2017}: an algorithm that exploits the delayed feedback to adjust the reward update in classical UCB framework;
	(iv) Bandit Learning-based Offloading of Tasks (BLOT) algorithm \cite{ZhuZW2019}: an offloading algorithm that selects the service FNs by means of the currently available information;
	(v) Non-delayed offloading: each task FN receives the feedback instantly using the Exponential-weight algorithm for Exploration and Exploitation (EXP3) algorithm presented in \cite{Auer2002}.
	Here QPM-D and SDB are meta schemes for handling delayed feedback. 
	Besides, we adopt Thompson sampling algorithm for the implementation of the QPM-D and SDB algorithm.
	Note that the BLOT algorithm is not designed for the latency minimization problem, a little bit modification is taken to make BLOT work in this system setting.

	In the first experiment, we consider a stationary setting in which the arrival rates of task FNs are constant over time.
	The arrival rates of FN $i$ is set to be $\lambda_i = 4.5 + 0.5\times i$ , suggesting that the FN $1$ achieves the best delay performance in statistics.
	In Fig.~\ref{fig:total_single}, we plot the regret performance, latency performance and the task FN selections of the DEB, QPM-D, SDB, DUCB, BLOT and non-delayed EXP3 algorithms.
	In addition, the latency of optimal choice (task FN always selects service FN 1) has been plotted to show the performance gap of the above algorithms.
	Here $1000$ Monte Carlo simulations are performed.
	
	Fig.~\ref{fig:total_single} (a) depicts the regret performance of offloading decisions with the above algorithms.	
	The $y$-label of the figure represents the ratio of the aggregate regret, revealing the regret order of the DEB, QPM-D, SDB, DUCB, BLOT and non-delayed EXP3 algorithms.
	The convergence performance of the proposed algorithm represents the learning process of the environment, such as the channel condition and the computation capacities of neighboring FNs.
	From Fig.~\ref{fig:total_single} (a), we observe that the DEB algorithm achieves better regret performance than the QPM-D, DUCB, BLOT and SDB algorithms.	
	The discrepancy of the performance between DEB, QPM-D, SDB algorithms is caused by the utilization of delayed feedbacks.
	The proposed DEB algorithm harnesses the delayed feedbacks instantly.
	The QPM-D and SDB algorithms store the delayed feedbacks of a service FN to its queue firstly and update when the service FN is selected.
	Due to the delayed feedback, such updating schemes will degrade the regret performance.
	As for the performance gap between the DUCB, the BLOT and the DEB algorithms, the DUCB and BLOT algorithms focus on the recent feedback but loses sight of the entire information, which results in a larger performance loss.
	Without the knowledge of other FNs' conditions and peer competitions, the FN adopted the DEB algorithm can always adapt to the environment and make a proper choice.
	
	Fig.~\ref{fig:total_single} (b) indicates the average offloading latency of the single-agent setting.
	The offloading latency of the QPM-D and SDB algorithms decrease fast at the beginning and achieves sub-optimal latency performance later.
	Moreover, the converged latency of algorithms DUCB and BLOT perform not as well as the DEB algorithm.
	The reason comes from that the DUCB and BLOT algorithms adopt the discount UCB style to handle the delayed feedback.
	Thus the exploration procedure pay more attention to the recent feedback. Thus the two algorithms are not sufficient to identify the best service FN.
	The proposed algorithm can fully explore the service FNs and then pick the best FN, thus revealing a better averaged latency performance.
	This figure can also be leveraged to complementary explain the results in Fig.~\ref{fig:total_single} (a).
	
	\begin{figure}[t]
		\centering
		\includegraphics[width=80mm]{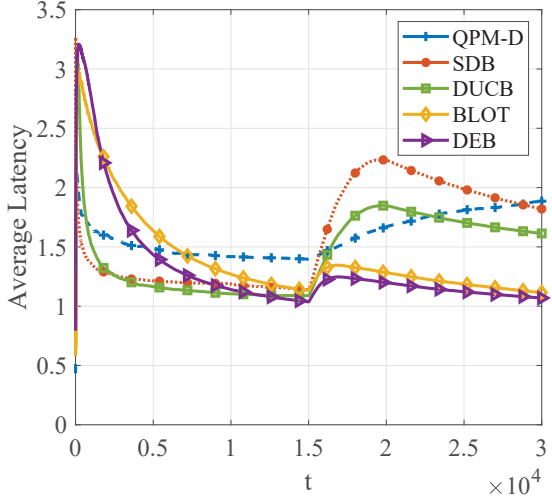}
		\caption{Performance comparison of the QPM-D, SDB, DUCB, BLOT and DEB algorithms in dynamic environment.}
		\centering
		\label{fig:total_single_d}
	\end{figure}
	In Fig.~\ref{fig:total_single} (c), the statistical selections of DEB, QPM-D, SDB, DUCB, BLOT and EXP3 algorithms are depicted with the same simulation setup.
	According to the simulation setup, FN $1$ has the smallest expected loss and FN $6$ has the largest expected loss.
	Fig.~\ref{fig:total_single} (c) reveals that the task FN selects the best FN for the most times with the DEB algorithm.
	For the FN with the largest loss, the task FN rarely offloads tasks to it and its selections occur during the exploration period for most of the time.
	Combining the three figures of Fig.~\ref{fig:total_single}, we can conclude that the DEB algorithm can better adapt to the changeable environment compared to other algorithms, thus yielding the smallest regret.

	In the second experiment,  the performance of the proposed algorithm in dynamic changing network environment is verified.
	In the first $T/2$ time slots, the arrival rates of FN $i$ is set as $\lambda_i = 4.5 + 0.5\times i$, which is the same as the first experiment.
	In the last $T/2$ time slots, the arrival rate of FN $1$ slows to $7$, then the FN $2$ starts to outperform FN $1$ and eventually becomes the leader.
	The above setting is more practical where the arrival rates of FNs may change due to the dynamics of the network structure.
	This experiment can exam the robustness of the proposed algorithm.

	In Fig.~\ref{fig:total_single_d}, we explore the average latency performance of the DEB, QPM, DUCB, BLOT and SDB algorithms in the dynamic setting. 
	Here $1000$ Monte Carlo simulations are performed.
	From Fig.~\ref{fig:total_single_d}, we observe that the performance tendencies of all the algorithms have changed at $T/2$ as the network statistic varies.
	Then the deteriorated performance  recovers eventually in the rest time.
	Contrasting with QPM-D, SDB, DUCB and BLOT algorithms, the DEB algorithm recovers faster when task FN suffers network statistic variation, suggesting a more robust property.
	It is because the DEB algorithm selects the service FNs according to a probability distribution, and the implicit exploration factor $\beta$ can help adapt to the dynamic environment.

	\begin{figure}[]
		\centering
		\includegraphics[width=84.5mm]{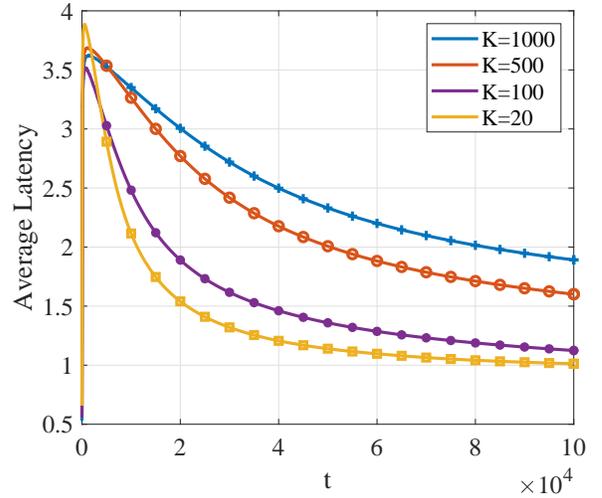}
		
		\caption{Latency performance of the DEB algorithm in the multi-agent setting with different number of service FNs.}
		\label{fig:multi_scalability}
	\end{figure}
	To evaluate the scalability of the DEB algorithm, we exam the latency performance of the DEB algorithm with different network sizes in Fig.~\ref{fig:multi_scalability}.
	Here the number of service FNs are set as $K=\{20,100,500,1000\}$.
	The FN $1$ is set as the best service FN with arrival rate $5$, while the rest service FNs are uniformly generated from $[6, 10]$.
	From Fig.~\ref{fig:multi_scalability}, we observe that the latency performance deteriorates with the increment of task FNs.
	Additional service FNs add the cost of exploration, which results in worse average latency.
	However, as time goes by, the average latency decreases. 
	The task FN with the DEB algorithm can pick the best service FN eventually.

	\subsection{Multi-agent Setting}
	In this part, we investigate the performance of the proposed algorithm in the multi-agent setting. 
	We set $V=4$ and $K=10$.
	The arrival rate of each FN $i$ is set as $\lambda_i = 4 + 0.4\times i$.
	In the multi-agent scenario, the DUCB and BLOT algorithms cause severe collisions in its exploration stage, thus the two algorithms are not applicable to multi-agent setting.
	Thus, we only compare the performance of the DEB, QPM and SDB algorithms here.
	\begin{figure}[t]
		\centering		
		\includegraphics[width=80mm]{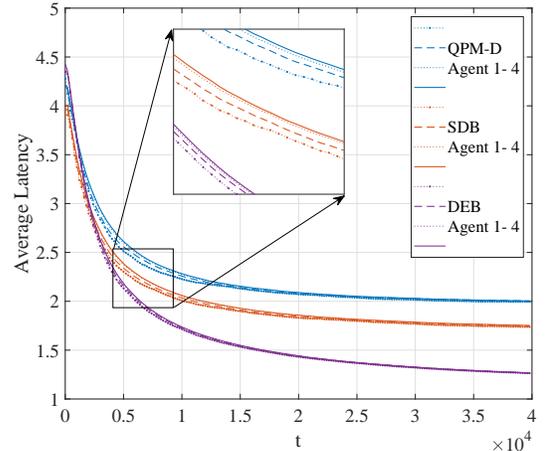}
		\label{fig:multi_latency_d}		
		\caption{Performance comparison of the QPM-D, SDB and DEB algorithms in the multi-agent setting.}
		\label{fig:multi_d}
	\end{figure}
	
	\begin{figure}[t]
		\centering
		\includegraphics[width=80mm]{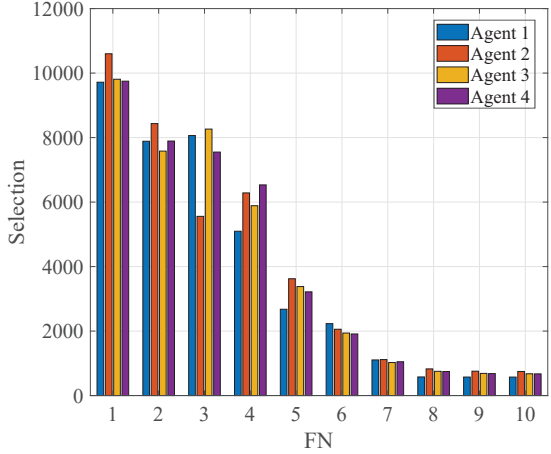}
		\caption{Service FNs selection of the DEB algorithms in the multi-agent setting.}
		\label{fig:multi_DEB}
	\end{figure}

	Fig.~\ref{fig:multi_d} explores the multi-agent performance of the DEB, SDB and QPM algorithms.
	Here we adopt a weak-collision setting, where the service FN processes the multiple arrivals in a random order manner.
	In this setting, the task FNs suffer slight collision loss when there exist multiple arrival tasks in service FN.
	Fig.~\ref{fig:multi_d} plots the average latency of different users with the above algorithms.
	Compared to the SDB and QPM algorithm, the DEB algorithm performs worse in the beginning, then gets better as time goes by. 
	The DEB algorithm chooses service FNs according to the probability distribution, indicating a sufficient exploration for all service FNs.
	Such selection rule is more suitable for dynamic Markov environment.	
	In addition, Fig.~\ref{fig:multi_DEB} shows the service FNs selections of the DEB algorithm for different agents.
	From Fig.~\ref{fig:multi_DEB}, we observe that the service FNs with low arrival rates are nearly equally selected by the four agents.	
	In this setting, the loss caused by collisions is slight.
	Therefore, the agents prefer to share the best FNs when the offloading strategy becomes stable.

	\begin{figure}[]
		\centering
		\subfigure[]{
			\includegraphics[width=42mm]{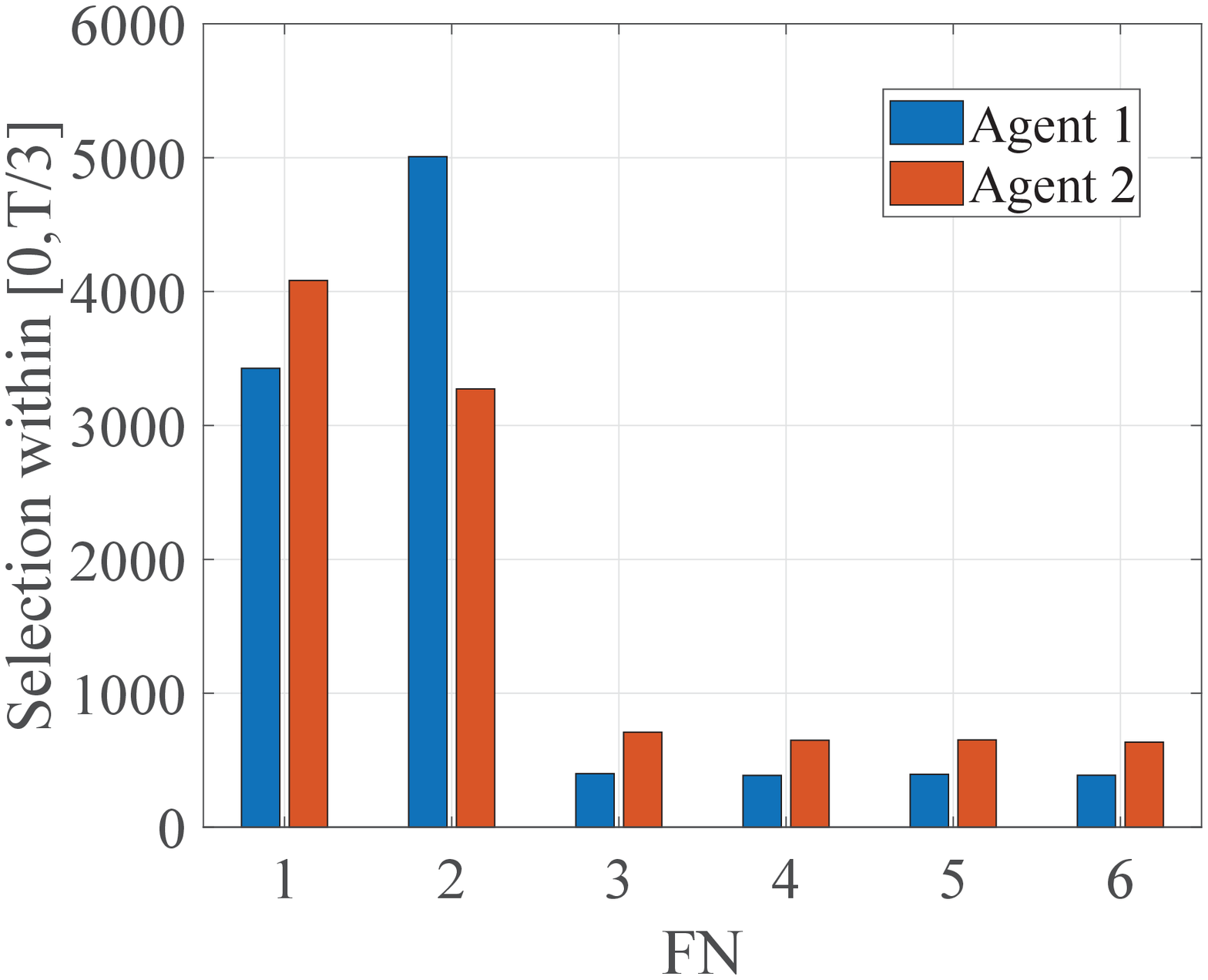}}
		\subfigure[]{
			\includegraphics[width=42mm]{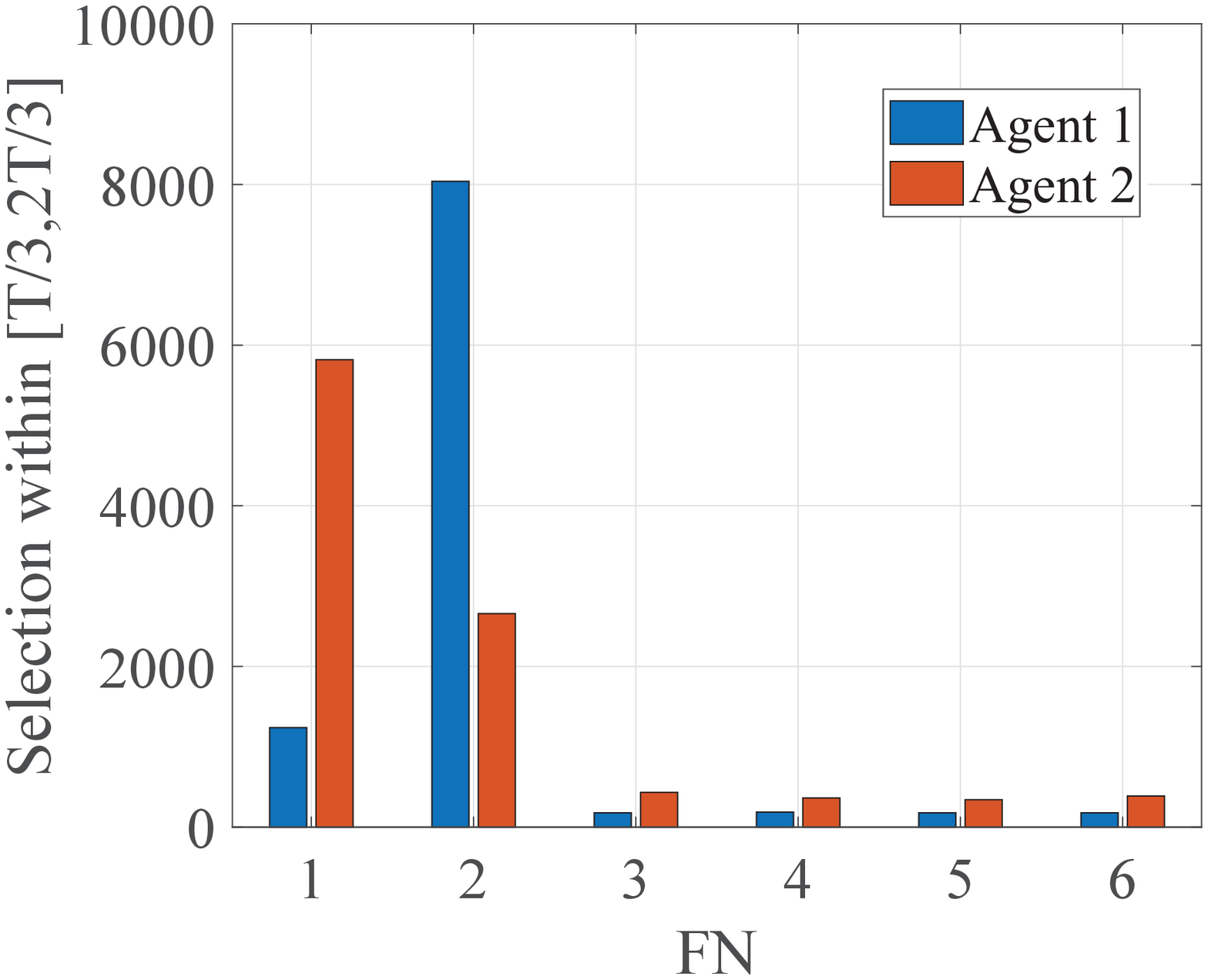}}	
		\subfigure[]{
			\includegraphics[width=42mm]{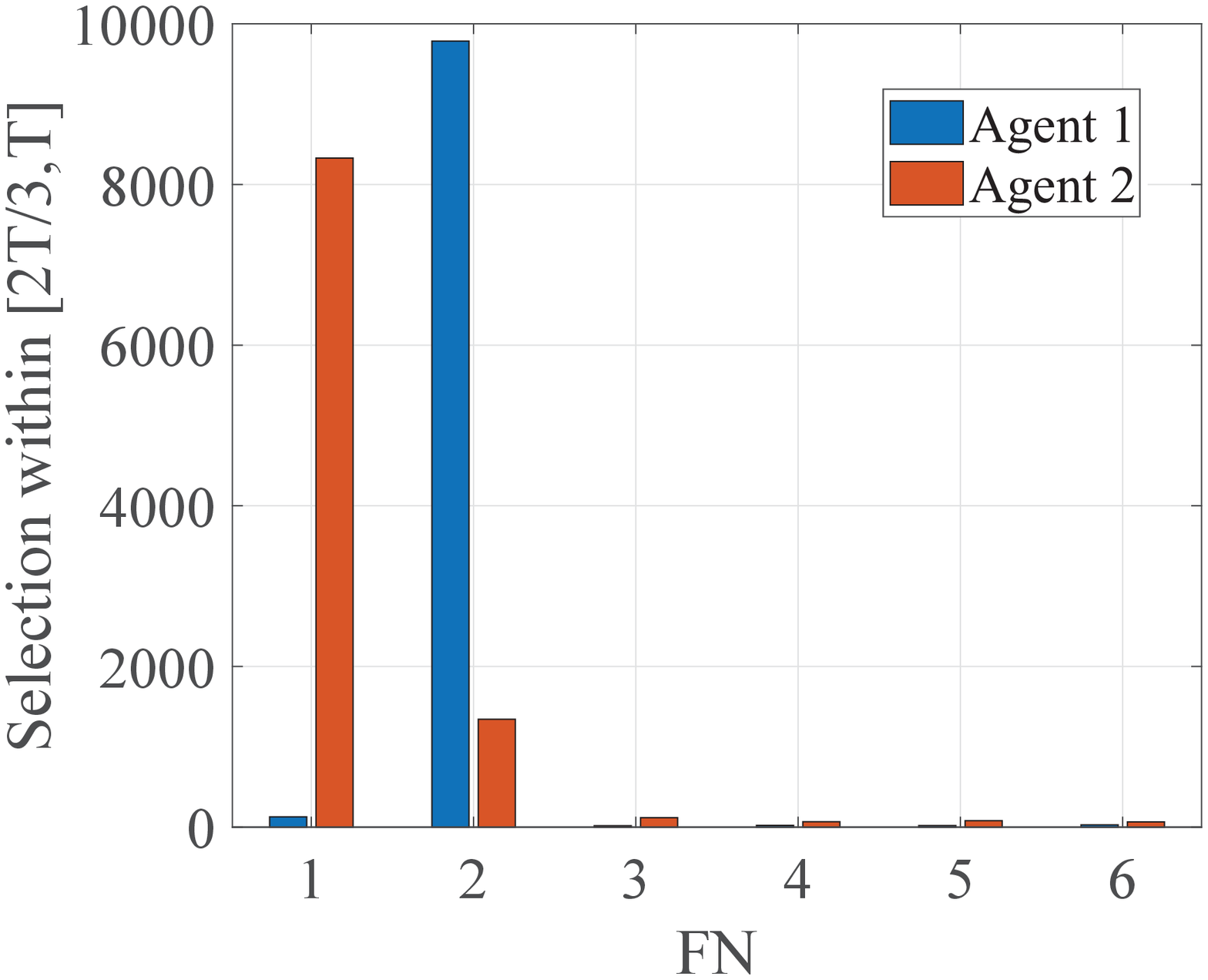}}	
		\subfigure[]{
			\includegraphics[width=42mm]{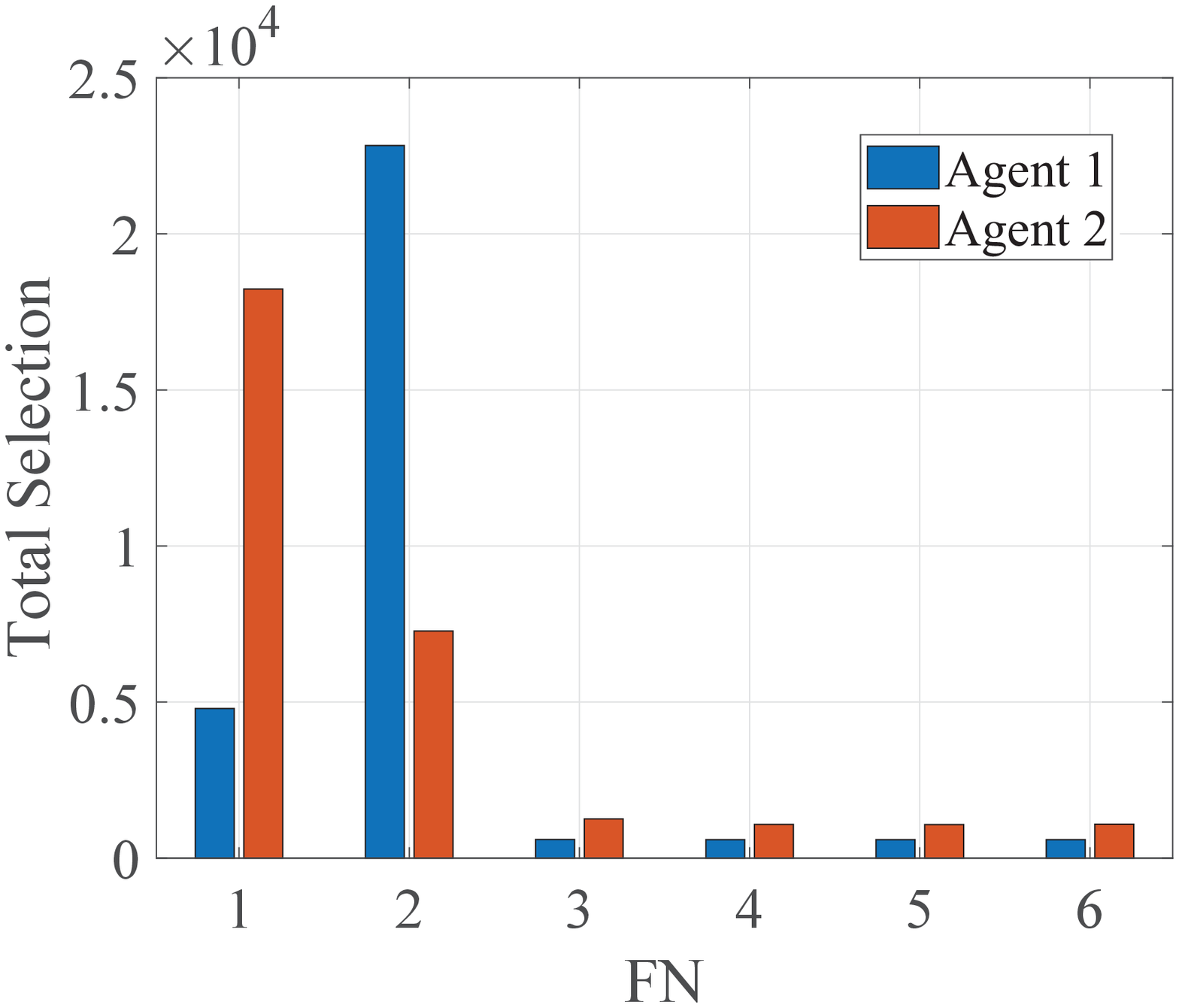}}	
		\caption{Service FNs selections in different time slot under `winner-tasks-all' setting.}
		\label{fig:NE}
	\end{figure}
	
	In the next experiment, we investigate the performance of the DEB algorithm in a `winner-takes-all' setting, where only one task FN can obtain the real loss and the rest suffer the total loss $1$.
	To reveal the NE property of the proposed algorithm, here we only consider $2$ agents.
	The number of service FNs is set as $6$.
	The arrival rate of service FN $i$ is set as $[4,4,5,6,7,8]$.
	It can be observed that the NE points in this setting is: (i) task FN $1$ chooses service FN $1$ while task FN $2$ chooses service FN $2$; (ii) task FN $1$ chooses service FN $2$ while task FN $2$ chooses FN $1$.
	Fig.~\ref{fig:NE} reveals the task FNs' selections during period [0,$\frac{T}{3}$], ($\frac{T}{3}$,$\frac{2T}{3}$], ($\frac{2T}{3}$,$T$] and $[0, T]$, respectively.
	In the earlier stage of offloading, the two task FNs prefer to share service FN $1$ and FN $2$.
	However, it will result in many collisions when their choices are the same.
	After suffering the performance loss from collisions, the two agents become smarter and gradually avoid selecting the same FN.
	Eventually, both the two task FNs achieve the NE point, where task FN $1$ choosing service FN $2$ and task FN $2$ choosing service FN $1$.
	This simulation result indicates that even with delayed feedback, the proposed algorithm can converge to an NE in the ergodic average sense.
	
	\section{Conclusions}
	\label{sec:conclusions}
    Considering the unknown peer FNs' stations and dynamic environmental changes, we studied the task peer offloading problem with minimal latency in fog networks in this paper.
	We modeled the task offloading problem as an adversarial multi-arm bandit problem with delayed feedback.
	To solve this problem, we proposed an efficient DEB algorithm that extends the general exponential based algorithm to the case with delayed feedback.
	The DEB algorithm is proven to achieve a sub-linear regret in both single-agent setting and multi-agent setting.
	Moreover, the DEB algorithm can lead to an NE with existing delay feedback in the multi-agent setting.
	Theoretical and numerical results validated the efficiency of the DEB algorithm.
	
	\appendices
	\section{}
	\begin{proof}[Proof of Theorem 1]
		Since the initial weight of each action is set as $1$, for all sequences of actions drawn by the proposed algorithm, we have
		\begin{align*}
			\frac{W_T}{W_0}
			&\geq\frac{\max\limits_j\exp(-\eta\sum_{s=1}^{T}\tilde{l}_s^j)}{K}\\
			&\geq\frac{\exp(-\eta\min\limits_j\sum_{s=1}^{T}\tilde{l}_s^j)}{K}.
		\end{align*}
		Then taking logarithms,
		\begin{align}
			\ln\frac{W_T}{W_0}\geq-\eta\min_j\sum_{s=1}^{T}\tilde{l}_s^j-\ln K. \label{equ:lower_bound}
		\end{align}
		On the other hand,
		\begin{align}
		\frac{W_t}{W_{t-1}}&=\sum_{i}\frac{w_{t-1}^i}{W_{t-1}}\prod_{s:s+d_s=t}\exp(-\eta\tilde{l}_s^i)\nonumber\\
		&=\sum_ip_t^a\prod_{s:s+d_s=t}\exp(-\eta\tilde{l}_s^i)\nonumber\\
		&\leq\sum_ip_t^a\sum_{s:s+d_s=t}\exp(-\eta\tilde{l}_s^i)\label{equ:wt_2}\\
		&\leq\sum_ip_t^a\sum_{s:s+d_s=t}\Big(1-\eta\tilde{l}_s^i+\frac{\eta^2}{2}(\tilde{l}_s^i)^2\Big)\label{equ:wt_3}\\
		&=\sum_{s:s+d_s=t}\Big(1-\eta\sum_ip_t^i\tilde{l}_s^i+\frac{\eta^2}{2}\sum_i p_t^i(\tilde{l}_s^i)^2\Big)\nonumber\\
		&=1+\Big|\{s:s+d_s=t\}\Big|-1-\eta\sum_{s:s+d_s=t}\sum_i p_t^i\tilde{l}_s^i\nonumber\\
		&~~~+\frac{\eta^2}{2}\sum_{s:s+d_s=t}\sum_i p_t^i(\tilde{l}_s^i)^2\nonumber\\
		&\leq \exp\Big(\Big|\{s:s+d_s=t\}\Big|-1-\eta\sum_{s:s+d_s=t}\sum_i p_t^i\tilde{l}_s^i\nonumber\\
		&~~~+\frac{\eta^2}{2}\sum_{s:s+d_s=t}\sum_i p_t^i(\tilde{l}_s^i)^2\Big)\label{equ:wt_4}.
		\end{align}
		From $\exp(-\eta \tilde{l}_s^i) \in (0, 1]$, $\forall i$, we can obtain (\ref{equ:wt_2}), the inequality of (\ref{equ:wt_3}) is given by $e^x\leq 1+x+x^2/2$ for $x\leq 0$ and inequality  (\ref{equ:wt_4}) is derived from  $e^x\geq 1+x$, $\forall x$.
		
		By a telescoping sum, we have
		\begin{align}
			\frac{W_T}{W_0}&\leq\exp\Big(-\eta\sum_t\sum_{s:s+d_s=t}\sum_i p_t^i\tilde{l}_s^i\nonumber\\
			&~~~+\frac{\eta^2}{2}\sum_t\sum_{s:s+d_s=t}\sum_i p_t^i(\tilde{l}_s^i)^2\Big).
		\end{align}
		Note that the sums of the form $\sum_t\sum_{s:s+d_s=t}$ only include each value of $s$ once and there may exist some feedbacks that can not be obtained before round $T$.		
		Since the feedbacks are delayed by $d_{\max}$ at most, the regret caused by missing samples is also bounded. Let $|\mathcal{M}|$ denote the regret caused by missing samples.
		Then we have,
		\begin{align}
			\ln \frac{W_T}{W_0}&\leq -\eta\sum_t\sum_i p_t^i\tilde{l}_s^i\nonumber\\
			&~~~+\frac{\eta^2}{2}\sum_t\sum_i p_t^i(\tilde{l}_s^i)^2+|\mathcal{M}|.\label{equ:upper_bound}
		\end{align}
		Combining (\ref{equ:lower_bound}) and (\ref{equ:upper_bound}), we can get
		\begin{align*}
			-\eta\min_j\sum_{s=1}^{T}\tilde{l}_s^j-\ln K&\leq -\eta\sum_t\sum_i p_t^i\tilde{l}_s^i\nonumber\\
			&~~~+\frac{\eta^2}{2}\sum_t\sum_i p_t^i(\tilde{l}_s^i)^2+|\mathcal{M}|.
		\end{align*}
		Rearranging the above equation, we have 
		\begin{align}
			\sum_t\sum_i p_{t+d_t}^i\tilde{l}_t^i-\min_j\sum_t\tilde{l}_t^j&\leq \frac{1}{2}\eta\sum_t\sum_i p_{t+d_t}^i(\tilde{l}_t^i)^2\nonumber\\
			&+\frac{\ln K}{\eta}+|\mathcal{M}|.
			\label{equ:before_upper_bound}
		\end{align}
		Firstly, we need to figure out the relationship between $p_{t+d_t}^i$ and $p_t^i$.

		\newtheorem*{lemma1}{Lemma 1}
		\begin{lemma1}
			\label{lemma:1}
			Let $N_t=|\{s:s+d_s\in[t,t+d_t)\}|$ denote the amount of feedbacks that arrives between playing action and observing its feedback.
			For a subset of the integers $C$, corresponding to time steps, let 
			\begin{align}
			q^i(C)=\frac{\exp(-\eta\sum_{s\in C}\tilde{l}_s^i)}{\sum_j\exp(-\eta\sum_{s\in C}\tilde{l}_s^j)} \label{equ:q}.
			\end{align}
			Then the resulting probabilities fulfill for every time slot $t$ and action $i$,
			\begin{align}
			p_{t+d_t}^i-p_t^i\leq -\eta\sum_{m=1}^{N_t}q^i(C_{t-1}\cup\{z^\prime_n:n<m\})\tilde{l}_{z^\prime_i}^i, 
			\end{align}
			where $z^\prime_n$ is an enumeration of $\{s:s+d_s\in [t,t+d_t)\}$.		
						
		\end{lemma1}
		\begin{proof}[proof of Lemma 1]
			Refer to \emph{Lemma 10} of \cite{Thune2019}.
		\end{proof}
		By using \emph{Lemma 1} to bound the left hand side (LHS) of (\ref{equ:before_upper_bound}), we have
		\begin{align}
		\sum_t\sum_i p_{t}^i\tilde{l}_t^i&-\min_j\sum_t\tilde{l}_t^j\leq \frac{1}{2}\eta\sum_t\sum_i p_{t+d_t}^i(\tilde{l}_t^i)^2\nonumber\\		
		&+\eta\sum_t\sum_i\tilde{l}_t^i\sum_{m=1}^{N_t}q^i(C_{t-1}\cup\{z^\prime_n:n<m\})\tilde{l}_{z^\prime_n}^i\nonumber\\
		&+\frac{\ln K}{\eta}+|\mathcal{M}|.
		\label{equ:upper_bound1}
		\end{align}
		Next, we will bound the second term on the right hand side (RHS) of (\ref{equ:upper_bound1}).
		Since $t$ is not part of the enumeration $z_j^\prime$, so the two expectations are taken independently: $\mathbb{E}[\tilde{l}_{z_j^\prime}^i]\leq1$ and $\mathbb{E}[\tilde{l}_t^i]\leq1$. 
		Then, we have
		\begin{align*}
		E\Big[\sum_t\sum_i\tilde{l}_t^i\sum_{m=1}^{N_t}q^i(C_{t-1}\cup\{z_n^\prime:n<m\})\tilde{l}_{z_n^\prime}^i\Big]\leq \sum_t N_t.
		\end{align*}
		Since counting in how many intervals every loss is observed is the same as counting how many losses are observed in every interval.
		Thus, summing over $t$ or $s$ is equivalent, i.e.,
		\begin{align*}
		\sum_tN_t&\leq \sum_t|\{s:s+d_s\in[t,t+d_t)\}|\\
		&=\sum_s|\{t:s+d_s\in[t,t+d_t)\}|.
		\end{align*}
		Both $s$ and $t$ is bounded to $T$.
		Then we have
		\begin{align}
		&\sum_s|\{t:s+d_s\in[t,t+d_t)\}|\nonumber\\
		&\leq\sum_s|\{t>s:s+d_s\in[t,t+d_t)\}||\nonumber\\
		&~~~+|\{t<s:s+d_s\in[t,t+d_t)\}|. \label{equ:delay_num_1}
		\end{align}
		and first term  of (\ref{equ:delay_num_1}) can be bounded as
		\begin{align}
		&|\{t>s:s+d_s\in[t,t+d_t)\}|\nonumber\\
		&\leq|\{t>s:t\leq s+d_s\}\backslash{\{t>s:t+d_t<s+d_s\}}|\nonumber\\
		&\leq d_s-|\{t>s:t+d_t<s+d_s\}|,\label{equ:delay_num1}
		\end{align}
		The second term is can be bounded in a similar way, 
		\begin{align}
		&|\{t<s:s+d_s\in[t,t+d_t)\}|\nonumber\\
		&\leq|\{t<s:s+d_s<t+d_t\}|.\label{equ:delay_num2}
		\end{align}
		
		Then we note that by the prior equivalency of summing over $t$ or $s$, the last term in (\ref{equ:delay_num1}) cancel with (\ref{equ:delay_num2}) when summed, which bounds the second term of (\ref{equ:upper_bound1}) by $\eta D$.

		The next lemma can help us bound the first term on RHS of (\ref{equ:upper_bound1}).
		\newtheorem*{lemma2}{Lemma 2}
		\begin{lemma2}
			\label{lemma:2}
			The probabilities defined in (\ref{equ:q}) is 
			\begin{align}
			q^i(\Phi_i)\leq \Big(1+\frac{1}{2N-1}\Big)q^i(\Phi_{i-1})\label{lemma3},  \forall i, 
			\end{align}
		\end{lemma2}
		\begin{proof}[proof of Lemma 2]
			Refer to \emph{Lemma 11} of \cite{Thune2019}.
		\end{proof}
	
		Using \textbf{Lemma 2}, we have 
		\begin{align}
		&\sum_t\sum_ip_{t+d_t}^i(\tilde{l}_t^i)^2\nonumber\\
		&=\sum_t\sum_ip_t^i\frac{p_{t+d_t}^i}{p_t^i}(\tilde{l}_t^i)^2\nonumber\\
		&=\sum_t\sum_ip_t^i(\tilde{l}_t^i)^2\prod_{i=1}^{N_t}\frac{q^i(C_{t-1}\cup\{z_j^\prime:j\leq i\})}{q^i(C_{t-1}\cup\{z_j^\prime:j<i\})}\nonumber\\
		&\leq \sum_t\sum_ip_t^i(\tilde{l}_t^i)^2(1+\frac{1}{2N-1})^{N_t}\nonumber\\
		&\leq \sum_t\sum_ip_t^i(\tilde{l}_t^i)^2(1+\frac{1}{2N-1})^{2N-1}\nonumber\\
		&\leq \sum_t\sum_ip_t^i(\tilde{l}_a^i)^2e.
		\label{equ:prob_inequality}
		\end{align}

		Finally, the inequality of (\ref{equ:upper_bound1}) can be derived as 
		\begin{align}
		\label{equ:regret}
		\sum_t\sum_ip_{t}^i\tilde{l}_t^i-\min_j\sum_t\tilde{l}_t^j&\leq\frac{\ln K}{\eta}+\eta D\nonumber\\
		&+\frac{\eta}{2}\sum_t\sum_ip_t^i(\tilde{l}_t^i)^2e+|\mathcal{M}|.
		\end{align}
		Note that the loss definition in (\ref{equ:hat_l}) is a biased estimation loss.
		Next we use the following lemma to bound the unbiased estimation loss of the proposed algorithm.
		\newtheorem*{lemma6}{Lemma 3}
		\begin{lemma6}
			\label{lemma:6}
			With probability at least $1-\delta$,
			\begin{align}
				\sum_{t=1}^{T}(\tilde{l}_t^i-l_t^i)\leq \frac{\ln(K/\delta)}{2\beta},
			\end{align}
			simultaneously holds for all $i\in[K]$.
		\end{lemma6}
		\begin{proof}[proof of Lemma 3]
			Refer to \emph{Lemma 1} of \cite{neu2015}.
		\end{proof}

		Applying \emph{Lemma 3}, with probability at least $1-\theta$, we have
		\begin{align}
			&\sum_t\sum_ip_{t}^i\tilde{l}_t^i-\min_j\sum_t\tilde{l}_t^j\nonumber\\
			&\leq\frac{\ln K}{\eta}+\frac{\eta}{2}\sum_t\sum_ip_t^i(\tilde{l}_a^i)^2e+\eta D+|\mathcal{M}|.
			\label{equ:regret2}
		\end{align}
		From the definition of estimated loss in (\ref{equ:hat_l}), the first term of LHS in (\ref{equ:regret2}) can be converted to
		\begin{align}
			\sum_{i}p_t^i\tilde{l}_t^i&=\sum_{i=1}p_t^i\frac{l_t^i}{p_t^i+\beta l_t^i}\mathbbm{1}_{\{I_t=i\}}\nonumber\\
			&=\sum_{i}\mathbbm{1}_{\{I_t=i\}}l_t^i-\beta\sum_{i}\frac{(l_t^i)^2}{p_t^i+\beta l_t^i}\mathbbm{1}_{\{I_t=i\}}\nonumber\\
			&=\sum_{i}\mathbbm{1}_{\{I_t=i\}}l_t^i-\beta\sum_{i}l_t^i\tilde{l}_t^i.
			\label{equ:pl}
		\end{align}
		Similarly,
		\begin{align}
			\sum_ip_t^i(\tilde{l}_t^i)^2=\sum_ip_t^i\frac{l_t^i}{p_t^i+\beta l_t^i}\tilde{l}_t^i\leq\sum_il_t^i\tilde{l}_t^i.
			\label{equ:pl2}
		\end{align}
		Combining (\ref{equ:pl}), (\ref{equ:pl2}) with (\ref{equ:regret2}), we have
		\begin{align}
			&\sum_t(l_{t,I_t}-\beta\sum_il_t^i\tilde{l}_t^i)-\min\limits_j\sum_t(l_t^j+\frac{\ln(K/\delta)}{2\beta})\nonumber\\
			&\leq\frac{\ln K}{\eta}+\frac{\eta e}{2}\sum_t\sum_il_t^i\tilde{l}_a^i+\eta D+|\mathcal{M}|.\nonumber\\
			&\leq\frac{\ln K}{\eta}+\frac{\ln(K/\delta)}{2\beta}+\eta D+(\frac{\eta e}{2}+\beta)\sum_t\sum_il_t^i\tilde{l}_a^i+|\mathcal{M}|\nonumber\\
			&\leq\frac{\ln K}{\eta}+\frac{\ln(K/\delta)}{2\beta}+\eta D+(\frac{\eta e}{2}+\beta)\sum_t\sum_i\tilde{l}_t^i+|\mathcal{M}|\nonumber\\
			&\leq\frac{\ln K}{\eta}+\frac{\ln(K/\delta)}{2\beta}+\eta D+(\frac{\eta e}{2}+\beta)(KT+\frac{K\ln(K/\delta)}{2\beta})\nonumber\\
			&+|\mathcal{M}|.
			\label{equ:final_inequ}
		\end{align}
		By properly selecting $\eta=2\beta=\sqrt{\frac{\ln K+\ln(K/\delta)}{D+(e+1)KT/2}}$,
		we have
		\begin{align}
		\bar{\mathcal{R}}\leq&2\sqrt{\Big(D+\frac{(e+1)}{2}KT\Big)\Big(2\ln K-\ln\delta\Big)}\nonumber\\
		&+\frac{(e+1)}{2}K\ln(\frac{K}{\delta})+|\mathcal{M}|,
		\end{align}
		which finishes the proof of Theorem 1.		
	\end{proof}

	\section{}
	\label{proof:nash}
	\begin{proof}[proof of theorem 2]
	Let $\epsilon>0$, and define the ergodic average of the value of the game by 
	\begin{align}
		\bar{U}_T=\frac{\sum_{t=1}^{T}U(\bm{p}_t,\bm{q}_t)}{T}.
	\end{align}
	By utilizing the proposed algorithm with cost sequence $l_{t}^{i}=U(i,\bm{q}_t)$, we know from (\ref{equ:final_inequ}) that the row player guarantees that for any column strategy, in particular $\bm{q}_t$, and any row strategy $\bm{p}$, we have
	\begin{align}
		&E\Biggl\{\sum_{t=1}^{T}\Big(U(\bm{p}_t,\bm{q}_t)-U(\bm{p},\bm{q}_t)\Big)\Biggr\}\leq\frac{\ln K}{\eta}+\frac{\ln(K/\delta)}{2\beta}\nonumber\\
		&+\eta D+(\frac{\eta e}{2}+\beta)(KT+\frac{K\ln(K/\delta)}{2\beta})+|\mathcal{M}|.\label{equ:NE1}
	\end{align}
	Let $G$ denote the constant term of (\ref{equ:NE1}), such that
	\begin{align*}
		G = \frac{\ln K}{\eta}+\frac{\ln(K/\delta)}{2\beta}+(\frac{\eta e}{2}+\beta)\frac{K\ln(K/\delta)}{2\beta}+|\mathcal{M}|
	\end{align*}
	There exists $T_1>0$, 
	\begin{align}
		E\Biggl\{\bar{U}_T-U(\bm{p},\bm{\bar{q}}_T)\Biggr\}&=E\Biggl\{\frac{\sum_{t=1}^{T}\Big(U(\bm{p}_t,\bm{q}_t)-U(\bm{p},\bm{q}_t)\Big)}{T}\Biggr\}\nonumber\\
		&\leq\frac{G+(\frac{\eta e}{2}+\beta)KT+\eta D}{T}\nonumber\\
		&\leq \frac{\epsilon}{2}, ~~~~~~~~~~~~~\quad\quad\forall ~ T>T_1,
		\label{equ:row_ineq_2}
	\end{align}
	where the second inequality holds since $D$ is bounded by $d_{\max}$ and $G$ is negligible as $T\rightarrow\infty$.
	
	Then for the column player with cost sequence $l_{t}^{j}=1-U(\bm{p}_t,j)$, for every strategy $\bm{p}_t$ and strategy $\bm{q}$, we have
	\begin{align}
		E\Biggl\{\sum_{t=1}^{T}\Big(U(\bm{p}_t,\bm{q})-U(\bm{p}_t,\bm{q}_t)\Big)\Biggr\}\leq G+(\frac{\eta e}{2}+\beta)KT+\eta D.
		\label{equ:col_ineq}
	\end{align}	
	Existing $T_2>0$, 
	\begin{align}
		E\Biggl\{U(\bm{\bar{p}}_T,\bm{q})-\bar{U}_T\Biggr\}&=E\Biggl\{\frac{\sum_{t=1}^{T}\Big(U(\bm{p}_t,\bm{q})-U(\bm{p}_t,\bm{q}_t)\Big)}{T}\Biggr\}\nonumber\\
		&\leq\frac{G+(\frac{\eta e}{2}+\beta)KT+\eta D}{T}\nonumber\\
		&\leq \frac{\epsilon}{2}, \qquad\qquad \quad\quad    \forall~ T>T_2.
		\label{equ:col_ineq_2}
	\end{align}	
	Now, define $\bm{p}_T^b$ as the best response to $\bm{\bar{q}}_T$, and we have
	\begin{align}
		\bm{p}_T^b=\arg\min_{\bm{p}'}~U(\bm{p}',\bm{\bar{q}}_T)|\nonumber,
	\end{align}
	together with $\bm{q}_T^b$, the best response to $\bm{\bar{p}}_T$, and we have
	\begin{align}
		\bm{q}_T^b=\arg\max_{\bm{q}'}~U(\bm{\bar{p}}_T,\bm{q}')|\nonumber.
	\end{align}
	Choosing $\bm{p}=\bm{p}_T^b,\bm{q}=\bm{\bar{q}}_T$ in (\ref{equ:row_ineq_2}), (\ref{equ:col_ineq_2}) and adding them together, we have
	\begin{align}
		E\Biggl\{\Bigg|U(\bm{\bar{p}}_T,\bm{\bar{q}}_T)-\min_{\bm{p}'}~U(\bm{p}',\bm{\bar{q}}_T)\Bigg|\Biggr\}&=E\Bigl\{\bar{U}_T-U(\bm{p}_T^b,\bm{\bar{q}}_T)\Bigr\}\nonumber\\
		&+E\Biggl\{U(\bm{\bar{p}}_T,\bm{\bar{q}}_T)-\bar{U}_T\Biggr\}\nonumber\\
		&\leq\epsilon, ~\forall~ T>\max\{T_1,T_2\}, 
		\label{equ:nash1}
	\end{align}
	where the first equality holds since $U(\bm{\bar{p}}_T,\bm{\bar{q}}_T)\geq U(\bm{p}_T^b,\bar{\bm{q}}_T)$.
	Choosing $\bm{p}=\bm{\bar{p}}_T,\bm{q}=\bm{q}_T^b$ in (\ref{equ:row_ineq_2}), (\ref{equ:col_ineq_2}) and adding them together, we have
	\begin{align}
		E\Biggl\{\Bigg|U(\bm{\bar{p}}_T,\bm{\bar{q}}_T)-\max_{\bm{q}'}~U(\bm{\bar{p}}_T,\bm{q'})\Bigg|\Biggr\}&=E\Bigl\{\bar{U}_T-U(\bm{\bar{p}}_T,\bm{\bar{q}}_T)\Bigr\}\nonumber\\
		&+E\Biggl\{U(\bm{\bar{p}}_T,\bm{q}_T^b)-\bar{U}_T\Biggr\}\nonumber\\
		&\leq\epsilon,  \forall ~T>\max\{T_1,T_2\},
		\label{equ:nash2}
	\end{align}
	where the first equality holds as $U(\bm{\bar{p}}_T,\bm{\bar{q}}_T)\leq U(\bm{\bar{p}}_T,\bm{q}_T^b)$.
	Equations (\ref{equ:nash1}) and (\ref{equ:nash2}) show that $(\bm{\bar{p}}_T,\bm{\bar{q}}_T)$ can converge to $\mathcal{N}_{\epsilon}$ in the $L1$ sense.
	\end{proof}
	\balance
	\bibliographystyle{IEEEtran}
	\bibliography{refs}

\begin{thebibliography}{10}
\providecommand{\url}[1]{#1}
\csname url@samestyle\endcsname
\providecommand{\newblock}{\relax}
\providecommand{\bibinfo}[2]{#2}
\providecommand{\BIBentrySTDinterwordspacing}{\spaceskip=0pt\relax}
\providecommand{\BIBentryALTinterwordstretchfactor}{4}
\providecommand{\BIBentryALTinterwordspacing}{\spaceskip=\fontdimen2\font plus
\BIBentryALTinterwordstretchfactor\fontdimen3\font minus
  \fontdimen4\font\relax}
\providecommand{\BIBforeignlanguage}[2]{{%
\expandafter\ifx\csname l@#1\endcsname\relax
\typeout{** WARNING: IEEEtran.bst: No hyphenation pattern has been}%
\typeout{** loaded for the language `#1'. Using the pattern for}%
\typeout{** the default language instead.}%
\else
\language=\csname l@#1\endcsname
\fi
#2}}
\providecommand{\BIBdecl}{\relax}
\BIBdecl

\bibitem{atzori2010internet}
L.~Atzori, A.~Iera, and G.~Morabito, ``The internet of things: A survey,''
  \emph{Computer networks}, vol.~54, no.~15, pp. 2787--2805, 2010.

\bibitem{Elbamby2019}
M.~S. {Elbamby}, C.~{Perfecto}, C.~{Liu}, J.~{Park}, S.~{Samarakoon},
  X.~{Chen}, and M.~{Bennis}, ``Wireless edge computing with latency and
  reliability guarantees,'' \emph{Proceedings of the IEEE}, vol. 107, no.~8,
  pp. 1717--1737, 2019.

\bibitem{Bonomi2012}
F.~Bonomi, R.~Milito, J.~Zhu, and S.~Addepalli, ``Fog computing and its role in
  the internet of things,'' in \emph{Workshop on Mobile Cloud Computing (MCC)},
  Aug. 2012, pp. 13--16.

\bibitem{Yi2015}
S.~Yi, C.~Li, and Q.~Li, ``A survey of fog computing: concepts, applications
  and issues,'' in \emph{Workshop on Mobile Big Data (Mobidata)}, Aug. 2015,
  pp. 13--16.

\bibitem{Bonomi2014}
F.~Bonomi, R.~Milito, P.~Natarajan, and J.~Zhu, ``Fog computing: A platform for
  internet of things and analytics,'' in \emph{Big data and internet of things:
  A roadmap for smart environments}.\hskip 1em plus 0.5em minus 0.4em\relax
  Springer, 2014, pp. 169--186.

\bibitem{Zhang2017}
K.~{Zhang}, Y.~{Mao}, S.~{Leng}, Y.~{He}, and Y.~{Zhang}, ``Mobile-edge
  computing for vehicular networks: A promising network paradigm with
  predictive off-loading,'' \emph{IEEE Vehicular Technology Magazine}, vol.~12,
  no.~2, pp. 36--44, 2017.

\bibitem{Wang2019}
J.~{Wang}, K.~{Liu}, B.~{Li}, T.~{Liu}, R.~{Li}, and Z.~{Han},
  ``Delay-sensitive multi-period computation offloading with reliability
  guarantees in fog networks,'' \emph{IEEE Transactions on Mobile Computing},
  pp. 1--1, 2019.

\bibitem{Deng2016}
R.~{Deng}, R.~{Lu}, C.~{Lai}, T.~H. {Luan}, and H.~{Liang}, ``Optimal workload
  allocation in fog-cloud computing toward balanced delay and power
  consumption,'' \emph{IEEE Internet of Things Journal}, vol.~3, no.~6, pp.
  1171--1181, 2016.

\bibitem{CY2017}
C.~{You}, K.~{Huang}, H.~{Chae}, and B.~{Kim}, ``Energy-efficient resource
  allocation for mobile-edge computation offloading,'' \emph{IEEE Transactions
  on Wireless Communications}, vol.~16, no.~3, pp. 1397--1411, 2017.

\bibitem{TDinh2017}
T.~Q. {Dinh}, J.~{Tang}, Q.~D. {La}, and T.~Q.~S. {Quek}, ``Offloading in
  mobile edge computing: Task allocation and computational frequency scaling,''
  \emph{IEEE Transactions on Communications}, vol.~65, no.~8, pp. 3571--3584,
  2017.

\bibitem{Kety2016}
I.~{Ketykó}, L.~{Kecskés}, C.~{Nemes}, and L.~{Farkas}, ``Multi-user
  computation offloading as multiple knapsack problem for 5g mobile edge
  computing,'' in \emph{European Conference on Networks and Communications
  (EuCNC)}, June 2016, pp. 225--229.

\bibitem{khaledi2016}
M.~Khaledi, M.~Khaledi, and S.~K. Kasera, ``Profitable task allocation in
  mobile cloud computing,'' in \emph{ACM Symposium on QoS and Security for
  Wireless and Mobile Networks (Q2SWinet)}, Nov. 2016, pp. 9--17.

\bibitem{Zhu2019}
H.~{Zhu}, K.~{Kang}, X.~{Luo}, and H.~{Qian}, ``Online learning for computation
  peer offloading with semi-bandit feedback,'' in \emph{IEEE International
  Conference on Acoustics, Speech and Signal Processing (ICASSP)}, May 2019,
  pp. 4524--4528.

\bibitem{GL2017}
G.~{Lee}, W.~{Saad}, and M.~{Bennis}, ``An online secretary framework for fog
  network formation with minimal latency,'' in \emph{IEEE International
  Conference on Communications (ICC)}, May 2017, pp. 1--6.

\bibitem{ZhuZW2019}
Z.~{Zhu}, T.~{Liu}, Y.~{Yang}, and X.~{Luo}, ``{BLOT}: Bandit learning-based
  offloading of tasks in fog-enabled networks,'' \emph{IEEE Transactions on
  Parallel and Distributed Systems}, vol.~30, no.~12, pp. 2636--2649, 2019.

\bibitem{Liu2019}
C.~{Liu}, M.~{Bennis}, M.~{Debbah}, and H.~V. {Poor}, ``Dynamic task offloading
  and resource allocation for ultra-reliable low-latency edge computing,''
  \emph{IEEE Transactions on Communications}, vol.~67, no.~6, pp. 4132--4150,
  2019.

\bibitem{MY2020}
M.~{Yang}, H.~{Zhu}, H.~{Wang}, Y.~{Koucheryavy}, K.~{Samouylov}, and
  H.~{Qian}, ``An online learning approach to computation offloading in dynamic
  fog networks,'' \emph{IEEE Internet of Things Journal}, pp. 1--1, 2020.

\bibitem{gustafsson2010malardalen}
J.~Gustafsson, A.~Betts, A.~Ermedahl, and B.~Lisper, ``The m{\"a}lardalen wcet
  benchmarks: Past, present and future,'' in \emph{International Workshop on
  Worst-Case Execution Time Analysis (WCET)}.\hskip 1em plus 0.5em minus
  0.4em\relax Schloss Dagstuhl-Leibniz-Zentrum fuer Informatik, July 2010.

\bibitem{Yang2020}
M.~{Yang}, H.~{Zhu}, H.~{Wang}, Y.~{Koucheryavy}, K.~{Samouylov}, and
  H.~{Qian}, ``Peer to peer offloading with delayed feedback: An adversary
  bandit approach,'' in \emph{IEEE International Conference on Acoustics,
  Speech and Signal Processing (ICASSP)}, May 2020, pp. 5035--5039.

\bibitem{MChiang2016}
M.~{Chiang} and T.~{Zhang}, ``Fog and {IoT}: An overview of research
  opportunities,'' \emph{IEEE Internet of Things Journal}, vol.~3, no.~6, pp.
  854--864, Dec. 2016.

\bibitem{LChen2018}
L.~{Chen}, S.~{Zhou}, and J.~{Xu}, ``Computation peer offloading for
  energy-constrained mobile edge computing in small-cell networks,''
  \emph{IEEE/ACM Transactions on Networking}, vol.~26, no.~4, pp. 1619--1632,
  Aug. 2018.

\bibitem{huang2012}
D.~{Huang}, P.~{Wang}, and D.~{Niyato}, ``A dynamic offloading algorithm for
  mobile computing,'' \emph{IEEE Transactions on Wireless Communications},
  vol.~11, no.~6, pp. 1991--1995, 2012.

\bibitem{Munoz2015}
O.~{Muñoz}, A.~{Pascual-Iserte}, and J.~{Vidal}, ``Optimization of radio and
  computational resources for energy efficiency in latency-constrained
  application offloading,'' \emph{IEEE Transactions on Vehicular Technology},
  vol.~64, no.~10, pp. 4738--4755, 2015.

\bibitem{3gpp36213}
``Evolved universal terrestrial radio access ({E-UTRA}); physical layer
  procedures (release 16),'' \emph{3GPP TS 36.213}, 2020.

\bibitem{costello1983error}
D.~J. Costello, \emph{Error control coding: fundamentals and
  applications}.\hskip 1em plus 0.5em minus 0.4em\relax prentice Hall, 1983.

\bibitem{Bertsekas1992}
D.~P. Bertsekas, R.~G. Gallager, and P.~Humblet, \emph{Data networks}.\hskip
  1em plus 0.5em minus 0.4em\relax Prentice-Hall International New Jersey,
  1992, vol.~2.

\bibitem{Auer2002}
P.~Auer, N.~Cesa-Bianchi, Y.~Freund, and R.~Schapire, ``The non-stochastic
  multi-armed bandit problem,'' \emph{SIAM Journal on Computing}, vol.~32, pp.
  48--77, 2002.

\bibitem{neu2015}
G.~Neu, ``Explore no more: Improved high-probability regret bounds for
  non-stochastic bandits,'' in \emph{Advances in Neural Information Processing
  Systems (NIPS)}, Dec. 2015, pp. 3168--3176.

\bibitem{Cai2011}
Y.~Cai and C.~Daskalakis, ``On minmax theorems for multiplayer games,'' in
  \emph{Annual ACM-SIAM Symposium on Discrete Algorithms (SODA)}, Jan. 2011,
  pp. 217--234.

\bibitem{Nash1951}
J.~Nash, ``Non-cooperative games,'' \emph{Annals of mathematics}, pp. 286--295,
  1951.

\bibitem{roughgarden2010algorithmic}
T.~Roughgarden, ``Algorithmic game theory,'' \emph{Communications of the ACM},
  vol.~53, no.~7, pp. 78--86, 2010.

\bibitem{Joulani2013}
P.~Joulani, A.~Gyorgy, and C.~Szepesv{\'a}ri, ``Online learning under delayed
  feedback,'' in \emph{International Conference on Machine Learning (ICML)},
  June 2013, pp. 1453--1461.

\bibitem{Mandel2015}
T.~Mandel, Y.-E. Liu, E.~Brunskill, and Z.~Popovi{\'c}, ``The queue method:
  Handling delay, heuristics, prior data, and evaluation in bandits,'' in
  \emph{AAAI Conference on Artificial Intelligence}, 2015.

\bibitem{Vernade2017}
C.~{Vernade}, O.~{Cappe}, and V.~{Perchet}, ``Stochastic bandit models for
  delayed conversions,'' in \emph{Conference on Uncertainty in Artificial
  Intelligence (UAI)}, Aug. 2017.

\bibitem{Thune2019}
T.~S. Thune, N.~Cesa-Bianchi, and Y.~Seldin, ``Nonstochastic multiarmed bandits
  with unrestricted delays,'' in \emph{Advances in Neural Information
  Processing Systems (NIPS)}, Dec. 2019, pp. 6541--6550.

\end{thebibliography}
	
\end{document}